\begin{document}

\title{$\beta$-skeletons for a set of line segments in $R^2$
\thanks{This research is supported by the ESF EUROCORES program EUROGIGA, CRP VORONOI.}}
%\author{
%        Miroslaw Kowaluk 
%        \thanks{Institute of Informatics, University of Warsaw, Poland {\tt kowaluk@mimuw.edu.pl}}
%            \and
%        Gabriela Majewska
%        \thanks{Institute of Informatics, University of Warsaw, Poland {\tt gm248309@students.mimuw.edu.pl}}
%}
%\date{\today}
\author{
        Miros{\l}aw Kowaluk 
%\inst{1}
             \and
                     Gabriela Majewska 
%\inst{1}
\institute{
Institute of Informatics, University of Warsaw, Warsaw, Poland, \\
\texttt{kowaluk@mimuw.edu.pl} \texttt{gm248309@students.mimuw.edu.pl} 
}
}
\date{}

\maketitle

\begin{abstract}
$\beta$-skeletons are well-known neighborhood graphs for a set of points.  
We extend this notion to sets of line segments in the Euclidean plane 
and present algorithms computing such skeletons for the entire range of $\beta$ values.
The main reason of such extension is the possibility to study $\beta$-skeletons for points
moving along given line segments.
We show that relations between $\beta$-skeletons for $\beta > 1$, $1$-skeleton 
(Gabriel Graph), 
%$2$-skeleton (Relative Neighborhood Graph) 
and the Delaunay triangulation for sets of points hold also for sets of segments.
We present algorithms for computing circle-based and lune-based $\beta$-skeletons. 
We describe an algorithm that for $\beta \geq 1$ computes the $\beta$-skeleton for a set $S$ 
of $n$ segments in the Euclidean plane in $O(n^2 \alpha (n) \log n)$ time 
in the circle-based case and in $O(n^2 \lambda_4(n))$ in the lune-based one,
where the construction relies on the Delaunay triangulation for $S$, $\alpha$ is a functional inverse of Ackermann function and $\lambda_4(n)$ denotes 
the maximum possible length of a $(n,4)$ Davenport-Schinzel sequence.
When $0 < \beta < 1$, the $\beta$-skeleton can be constructed in a $O(n^3 \lambda_4(n))$ time. 
In the special case of $\beta = 1$, which is a generalization of Gabriel Graph, 
the construction can be carried out in a $O(n \log n)$ time.
\end{abstract}

\section{Introduction}
 
$\beta$-skeletons in $\mathbb{R}^2$ belong to the family of proximity graphs, 
geometric graphs in which an edge between two vertices (points) exists 
if and only if they satisfy particular geometric requirements. 
%Our definition of the $\beta$-skeleton for line segments is based on 
In this paper we use the following 
definitions of the $\beta$-skeletons for sets of points in the Euclidean space ($\beta$-skeletons are also defined for $\beta \in \{0, \infty\}$ but those cases
have no significant influence on our considerations) :

\begin{definition} 
\label{betaskeleton}
For a given set of points  $V=\{v_1,v_2, \dots ,v_n\}$ in $\mathbb{R}^2$, a distance 
function $d$  and a parameter $0 < \beta < \infty$ we define a graph 
\begin{itemize}
\item
$G_{\beta}(V)$ -- called a lune-based $\beta$-skeleton \cite{kr85} -- as follows:  
two points $v',v'' \in V$ are connected with an edge if and only if no point 
from $V \setminus \{v',v''\}$ belongs to the set 
$N(v',v'',\beta)$ (neighborhood) where:

\begin{enumerate} 
%\item 
%for $\beta=0$,  $N(v', v'',\beta)$ is equal to the segment $v'v''$;
\item 
for $0< \beta < 1$, $N(v', v'',\beta)$ is the intersection of two discs, 
each with radius $\frac{d(v',v'')}{2 \beta}$ and having the segment $v'v''$
as a chord,
%each of them has
%radius $\frac{d(v',v'')}{2 \beta}$ and whose boundaries contain both $v'$ and $v''$;
\item 
for $1 \leq \beta < \infty$, $N(v',v'',\beta)$ is the intersection of two  
discs, each with radius $\frac{\beta d(v',v'')}{2}$, whose centers are in points 
$(\frac{\beta}{2})v'+(1-\frac{\beta}{2})v''$ and 
in $(1-\frac{\beta}{2})v'+(\frac{\beta}{2})v''$, respectively;
%\item 
%for $\beta=\infty$, $N(v',v'',\beta)$ is the unbounded strip between lines 
%perpendicular to the segment $v'v''$ and containing $v'$ and $v''$.
\end{enumerate}

\item
$G^c_{\beta}(V)$ -- called a circle-based $\beta$-skeleton \cite{e02} -- as follows:  
two points $v', v''$ are connected with an edge if and only if no point 
from $V \setminus \{v', v''\}$ belongs to the set 
$N^c(v', v'',\beta)$ (neighborhood) where:
\begin{enumerate}
\item
for $0 < \beta < 1$ there is $N^c(v',v'',\beta) = N(v', v'',\beta)$, 
%is an intersection of two discs, 
%each with radius $\frac{d(v',v'')}{2 \beta}$ and having the segment $v'v''$
%as a chord.
\item
for $1 \leq \beta$ the set $N^c(v',v'',\beta)$ is a union of two discs, 
each with radius $\frac{\beta d(v',v'')}{2}$ and having the segment $v'v''$
as a chord.
\end{enumerate}
\end{itemize}

%The neighborhood $N(v', v'',\beta)$ is called a lune. 
%(The term {\em lune}, which was used in the literature up to now, is related 
%to the complement of the lens to the disc which looks like a lune - see 
%Figure \ref{fig:region}). 
\end{definition}

\begin{figure}[htbp]
\centering
\includegraphics[scale=0.3]{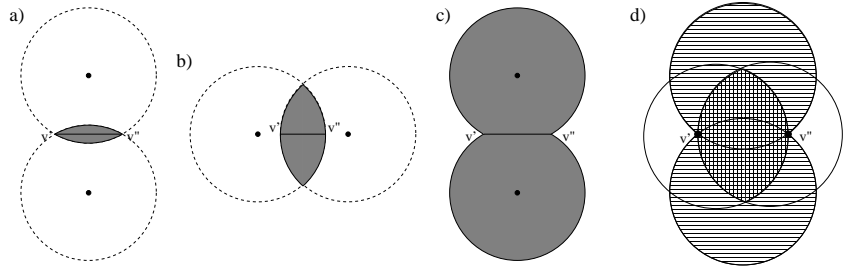}
\caption{Neighborhoods of the $\beta$-skeleton: (a) for $0 < \beta \leq 1$ , (b) the lune-based 
skeleton, (c) the circle-based skeleton for $1 < \beta < \infty$. The relation between 
neighborhoods: (d) $N(v',v'',\beta) \subseteq N^c(v',v'',\beta)$.}
\label{fig:region}
\end{figure}

%\begin{definition}
%\label{circle-based} \cite{e02}
%For a given set of points  $V=\{v_1, v_2, \dots , v_n\}$ in $\mathbb{R}^2$, a distance 
%function $d$ and a parameter $\beta \geq 0$ we define a graph 
%$G^c_{\beta}(V)$ -- called a circle-based $\beta$-skeleton -- as follows:  
%two points $v', v''$ are connected with an edge if and only if no point 
%from $V \setminus \{v', v''\}$ belongs to the set 
%$N^c(v', v'',\beta)$ where:
%\begin{enumerate}
%\item 
%for $\beta=0$,  $N^c(v', v'',\beta)$ is equal to the segment $v'v''$;
%\item
%for $0 < \beta < 1$ the set $N^c(v',v'',\beta)$ is an intersection of two discs, 
%each with radius $\frac{d(v',v'')}{2 \beta}$ and having the segment $v'v''$
%as a chord.
%\item
%for $1 \leq \beta$ the set $N^c(v',v'',\beta)$ is a union of two discs, 
%each with radius $\frac{\beta d(v',v'')}{2}$ and having the segment $v'v''$
%as a chord.
%\end{enumerate} 
%\end{definition}

Points $v',v'' \in V$ are called generators of the neighborhood $N(v',v'',\beta)$
($N^c(v',v'',\beta)$, respectively).
 The neighborhood $N(v', v'',\beta)$ is called a lune. 
It follows from the definition that $N(v',v'',1) = N^c(v',v'',1)$.
 
$\beta$-skeletons are both important and popular because of many practical applications 
which span a spectrum of areas from geographic information systems to wireless ad hoc 
networks and machine learning. 
For example, they allow us to reconstruct a shape of a two-dimensional object from 
a given set of sample points and they are also helpful in finding the minimum 
weight triangulation of a point set. 

Hurtado, Liotta and Meijer \cite{hlm03} presented an $O(n^2)$ algorithm for
the $\beta$-skeleton when $\beta< 1$.
Matula and Sokal \cite{ms84} showed that the lune-based $1$-skeleton (Gabriel Graph
$\mathit{GG}$) can be computed from the Delaunay triangulation in a linear time.
Supowit \cite{su83} described how to construct the lune-based $2$-skeleton (Relative 
Neighborhood Graph $\mathit{RNG}$) of a set of $n$ points in $O(n \log n)$ time.
Jaromczyk and Kowaluk \cite{jk87} showed how to construct the $\mathit{RNG}$ 
from the Delaunay triangulation $\mathit{DT}$ for the $L_p$ metric $(1<p<\infty)$ 
in $O(n\alpha(n))$ time.
%, where $\alpha$ is a functional inverse of Ackermann function.
This result was further improved to $O(n)$ time \cite{l94} for $\beta$-skeletons 
where $1 \leq \beta \leq 2$.
For $\beta > 1$, the  circle-based $\beta$-skeletons can be 
constructed in $O(n \log n)$ time from the  Delaunay triangulation $\mathit{DT}$ 
with a simple test to filter edges of the  $\mathit{DT}$ \cite{e02}.
On the other hand, so far the fastest algorithm for computing the lune-based 
$\beta$-skeletons for $\beta>2$ runs in $O(n^{\frac{3}{2}} \log^{\frac{1}{2}}n)$ 
time \cite{k12}.
 
%The above definitions can be used to define the $\beta$-skeleton graphs 
%for a set  $S$ of line segments $s_1, \dots , s_n$ as follows.

 Let us consider the case when we compute the $\beta$-skeleton for a set of $n$ points 
$V$ where every point $v \in V$ is allowed to move along a straight-line segment $s_v$. 
Let $S=\{s_v|v \in V\}$. 
For each pair of segments $s_{v_1}, s_{v_2}$ containing points $v_1,v_2 \in V$, 
respectively, we want to find such positions of points $v_1$ 
and $v_2$ that 
%the lune defined for this pair does not intersect any segment 
$s_v \cap N(v_1,v_2,\beta) = \emptyset$ for any $s_v \in S \setminus \{s_1,s_2\}$. 
We will attempt to solve this problem 
by defining a $\beta$-skeleton for the set of line segments $S$ as follows.

\begin{definition}
$G_{\beta}(S)$ ($G^c_{\beta}(S)$, respectively) is a graph with $n$ vertices such that there exists a bijection between the set of vertices and the set of segments $S$,
%{\bf such that each vertex corresponds to one for every segment in $S$, } 
%and two vertices determine an edge $s's''$, where $s',s'' \in S$, 
 and for $s',s'' \in S$ an edge $s's''$ exists
if there are points $v' \in s'$ and $v'' \in s''$ such that 
$(\bigcup_{s \in S \setminus \{s', s''\}} s) \cap N(v',v'',\beta) = \emptyset$
($(\bigcup_{s \in S \setminus \{s', s''\}} s) \cap N^c(v',v'',\beta) = \emptyset$, 
respectively).
\end{definition}

Note that when segments degenerate to points, we have the standard $\beta$-skeleton 
for a point set.

Geometric structures concerning a set of line segments, e.g. the Voronoi diagram 
\cite{bms84,pz13} or the straight skeleton \cite{aa96} are well-studied in the literature.

Chew and Kedem \cite{ck89} defined the Delaunay triangulation for line segments.
Their definition was generalized by Br\'{e}villiers et al. \cite{bcs07}.  

However, $\beta$-skeletons for a set of line segments were completely unexplored.
This paper makes an initial effort to fill this gap.

The paper is organized as follows.
In the next section  we present some basic facts and we prove that 
the definition of $\beta$-skeletons for a set of line segments preserves inclusions 
from the theorem of Kirkpatrick and Radke \cite{kr85} formulated for a set of points.
In Section 3 we show a general algorithm computing $\beta$-skeletons for a set of line 
segments in Euclidean plane when $0 < \beta < 1$. In Section 4 we present 
a similar algorithm for $\beta \geq 1$ in both cases of lune-based and circle-based 
$\beta$-skeletons. In Section 5 we consider an algorithm for Gabriel Graph. 
The last section contains open problems and conclusions.

\section{Preliminaries}

Let us consider a two-dimensional plane $\mathbb{R}^2$ with the Euclidean metric 
and a distance function $d$.

Let $S$ be a finite set of disjoint closed line segments in the plane.
Elements of $S$ are called sites.
A circle is tangent to a site $s$ if $s$ intersects the circle 
but not its interior.
We assume that the sites of $S$ are in general position, i.e., no three segment 
endpoints are collinear and no circle is tangent to four sites.

 The Delaunay triangulation for the set of line segments $S$ 
is defined as follows.

%Let the Delaunay triangulation $\mathit{DT}(S)$ for the set of line segments $S$ 
%be the graph (multigraph) dual to the Voronoi diagram for $S$ (see \cite{bcs07}).

\begin{definition} \cite{bcs07}
\label{segment-dt}
 The segment triangulation $P$ of $S$ is a partition of the convex hull $conv(S)$ of $S$ 
in disjoint sites, edges and faces such that:
\begin{itemize}
\item
Every face of $P$ is an open triangle whose vertices belong to three distinct sites of $S$ 
and whose open edges do not intersect $S$,
\item
No face can be added without intersecting another one,
\item
The edges of $P$ are the (possibly two-dimensional) connected components 
of $conv(S) \setminus (F \cup S)$, where $F$ is the set of faces of $P$.
\end{itemize} 

The segment triangulation $P$ such that the interior of the circumcircle of each triangle 
does not intersect $S$ is called the segment Delaunay triangulation. 

%Let $F$ be the set of triangles on the plane such that the vertices of each triangle
%belong to three distinct sites of $S$ and such that the interior of the circumcircle 
%of each triangle does not intersect $S$. 
%Then the triangles of $F$ define the faces of the segment Delaunay triangulation for $S$.
%The edges of the Delaunay triangulation are the (possibly two-dimensional) connected
%components of $conv(S) \setminus (F \cup S)$, where $conv(S)$ is the convex hull of $S$.
\end{definition}

 In this paper we will consider a planar graph (a planar multigraph, respectively) 
$\mathit{DT}(S)$ corresponding to the segment Delaunay triangulation $P$ and its relations 
with $\beta$-skeletons. This graph has a linear number of edges and is dual to the Voronoi Diagram graph for $S$.
It is also possible to study properties of plane partitions generated by $\beta$-skeletons 
for line segments. We will discuss this problem in the last section of this paper.

%\begin{remark}
%\label{lune-in-circle}
%For any $1 < \beta$ and a given set of line segments $S$ the following inclusion
%holds true $G^c_{\beta}(S) \subseteq G_{\beta}(S)$.
%\end{remark}
%\begin{proof}
%For a given parameter $1 < \beta$, neighborhoods in lune-based and circle-based 
%$\beta$-skeletons are created by circles with the same radius.
%A straightforward consequence of fact that two different circles intersect in at most 
%two points is inclusion $N(v',v'',\beta) \subseteq N^c(v',v'',\beta)$ for any pair 
%of points $v' \neq v''$ (see Figure \ref{fig:region}). 
%Hence $G^c_{\beta}(S) \subseteq G_{\beta}(S)$. 
%\end{proof}

%\begin{figure}[htbp]
%\centering
%\includegraphics[scale=0.35]{./lens-circle.eps}
%\caption{$N(v',v'',\beta) \subseteq N^c(v',v'',\beta)$.}
%\label{fig:lens-circle}
%\end{figure}  

%$MST(S)$ is a subgraph of a complete weighted graph, whose vertices correspond 
%to elements of $S$ and a weight of an edge is equal to the shortest distance 
%between corresponding segments. 

%\begin{lemma}
%\label{mst-in-dt}
%For a given set of line segments $S$ there is $MST(S) \subseteq \mathit{DT}(S)$. 
%\end{lemma}
%\begin{proof}
%Let us assume that there exist segments $s_1, s_2 \in S$ such that $s_1s_2 \in MST(S)$
%and $s_1s_2 \notin \mathit{DT}(S)$. Then the shortest path between $s_1$ and $s_2$ intersects 
%some segment $s_3 \in S$ or a triangle from $\mathit{DT}(S)$. In both cases we can replace 
%$s_1s_2$ by an edge connecting $s_1$ or $s_2$ with $s_3$ or some vertex of the triangle
%so that we obtain a tree with smaller weight than the previous one.  
%\end{proof}

 We can  consider open (closed, respectively) neighborhoods $N(v',v'',\beta)$
that lead to {\em open} ({\em closed}, respectively) {\em $\beta$-skeletons}. For example, 
the {\em Gabriel Graph} $\mathit{GG}$ \cite{gs69} is the closed $1$-skeleton and the
{\em Relative Neighborhood Graph} $\mathit{RNG}$ \cite{tou80} is the open $2$-skeleton.

Kirkpatrick and Radke \cite{kr85} showed a following important inclusions connecting 
$\beta$-skeletons for a set of points $V$ with the Delaunay triangulation 
$\mathit{DT}(V)$ of $V$ : 
%$\mathit{RNG}(V) 
$G_{\beta'}(V) \subseteq G_{\beta}(V) \subseteq \mathit{GG}(V) \subseteq \mathit{DT}(V)$, 
where $\beta' > \beta > 1$. 

%\begin{theorem} \cite{kr85}
%\label{faktinkluzja}
%Let us assume that points in $V$ are in general position.
%For $1\leq \beta \leq \beta' \leq 2$ following inclusions hold true: 
%$\mathit{MST}(V) \subseteq \mathit{RNG}(V) \subseteq G_{\beta'}(V)\subseteq G_{\beta}(V) 
%\subseteq \mathit{GG}(V) \subseteq \mathit{DT}(V)$.
%\end{theorem}

We show that definitions of the $\beta$-skeleton and the Delaunay 
triangulation for a set of line segments $S$ preserve those inclusions. 
%from Theorem \ref{faktinkluzja}.
 We define $\mathit{GG}(S)$ as a $1$-skeleton.
%($\mathit{RNG}(S)$, respectively) as a lune-based $1$-skeleton ($2$-skeleton, respectively).
   
\begin{theorem}
\label{luneinkluzja}
Let us assume that line segments in $S$ are in general position and let $G_{\beta}(S)$ 
($G^c_{\beta}(S)$, respectively) denote the lune-based (circle-based, respectively)
$\beta$-skeleton for the set $S$.
%For $1\leq \beta \leq \beta' \leq 2$
For $1\leq \beta < \beta'$ following inclusions hold true: 
%$\mathit{MST}(S) \subseteq \mathit{RNG}(S) \subseteq G_{\beta'}(S)\subseteq G_{\beta}(S) 
%\subseteq mathit{GG}(S) 
%\subseteq \mathit{DT}(S)$
$G_{\beta'}(S)\subseteq G_{\beta}(S) \subseteq \mathit{GG}(S) \subseteq \mathit{DT}(S)$
($G^c_{\beta'}(S)\subseteq G^c_{\beta}(S) \subseteq \mathit{GG}(S) \subseteq 
\mathit{DT}(S)$, respectively). 
\end{theorem}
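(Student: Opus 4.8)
The plan is to reduce the whole chain to two ingredients: a pointwise monotonicity of the neighbourhoods in $\beta$, which handles every inclusion down to $\mathit{GG}(S)$, and an empty-disc argument for $\mathit{GG}(S)\subseteq \mathit{DT}(S)$. First I would record the nesting of the neighbourhoods for \emph{fixed} generators: for $1\le \beta<\beta'$ one has $N(v',v'',\beta)\subseteq N(v',v'',\beta')$ and $N^c(v',v'',\beta)\subseteq N^c(v',v'',\beta')$. This is exactly the monotonicity used by Kirkpatrick and Radke for point sets and depends only on the shape of a single lune (resp.\ union of discs), not on $S$: in the lune case the two defining discs grow and their intersection expands, in the circle case the two discs over the fixed chord $v'v''$ grow. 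Granting this, suppose $s's''$ is an edge of $G_{\beta'}(S)$; by definition there are witnesses $v'\in s'$, $v''\in s''$ with $(\bigcup_{s\in S\setminus\{s',s''\}}s)\cap N(v',v'',\beta')=\emptyset$. Since $N(v',v'',\beta)\subseteq N(v',v'',\beta')$, the \emph{same} witnesses certify that $s's''$ is an edge of $G_\beta(S)$, so $G_{\beta'}(S)\subseteq G_\beta(S)$, and identical reasoning gives $G^c_{\beta'}(S)\subseteq G^c_\beta(S)$. Taking the lower parameter to be $1$ and recalling $N(v',v'',1)=N^c(v',v'',1)$ (so $G_1(S)=G^c_1(S)=\mathit{GG}(S)$) yields $G_\beta(S)\subseteq \mathit{GG}(S)$ and $G^c_\beta(S)\subseteq \mathit{GG}(S)$ for every $\beta\ge 1$.

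It then remains to prove $\mathit{GG}(S)\subseteq \mathit{DT}(S)$, which I regard as the crux. I would first use the empty-disc characterisation of the segment Delaunay graph dual to the segment Voronoi diagram: two sites $s',s''$ are adjacent in $\mathit{DT}(S)$ iff there is an open disc tangent to both whose interior meets no site of $S$, equivalently iff some point $q$ satisfies $d(q,s')=d(q,s'')=\min_{s\in S}d(q,s)$. Now let $s's''$ be a $\mathit{GG}(S)$-edge with witnesses $v'\in s'$, $v''\in s''$, so the open disc $D_0$ with diameter $v'v''$, centre $p_0$ and radius $r_0=d(v',v'')/2$, contains no point of any $s\in S\setminus\{s',s''\}$, i.e.\ $d(p_0,s)\ge r_0$ for all such $s$. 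The hard part is that $D_0$ itself need not be tangent to $s'$ or $s''$: these two segments pass through $v',v''\in \partial D_0$ and may protrude into $D_0$, so $D_0$ is not yet a Delaunay witness.

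To repair this I would slide the centre along the diameter. Consider $f(x)=d(x,s')-d(x,s'')$ on the segment $v'v''$: since $v'\in s'$ and $v''\in s''$ we have $f(v')\le 0\le f(v'')$, so by continuity there is a point $q$ on $v'v''$ with $d(q,s')=d(q,s'')=:\rho$. Writing $t=d(q,p_0)$ and using that $q$ lies on the diameter, $\min(d(q,v'),d(q,v''))=r_0-t$, whence $\rho\le d(q,v')$ and $\rho\le d(q,v'')$ give $\rho\le r_0-t$. On the other hand, for every $s\in S\setminus\{s',s''\}$ the triangle inequality together with $d(p_0,s)\ge r_0$ gives $d(q,s)\ge d(p_0,s)-t\ge r_0-t\ge \rho$. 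Hence the open disc of radius $\rho$ centred at $q$ is tangent to both $s'$ and $s''$ and meets no site in its interior, so $q$ realises $d(q,s')=d(q,s'')=\min_{s}d(q,s)$ and $s's''$ is an edge of $\mathit{DT}(S)$.

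Finally I would dispose of degeneracies. General position (no circle tangent to four sites) guarantees that this common value is attained either on a genuine Voronoi edge shared by the cells of $s'$ and $s''$ or at a Voronoi vertex incident to both cells, so in either case the Delaunay adjacency is real. Since both $\mathit{GG}(S)$ and the characterisation of $\mathit{DT}(S)$ are insensitive to whether we started from the lune- or the circle-based skeleton, this last step closes both chains at once, completing the proof.
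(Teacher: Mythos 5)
Your proposal is correct, and its overall decomposition coincides with the paper's: the chain $G_{\beta'}(S)\subseteq G_{\beta}(S)\subseteq \mathit{GG}(S)$ is obtained exactly as in the paper, by fixing the witness generators and invoking the Kirkpatrick--Radke monotonicity $N(v',v'',\beta)\subseteq N(v',v'',\beta')$, so that the same pair $(v',v'')$ certifying an edge for $\beta'$ certifies it for $\beta$. Where you genuinely diverge is the crux step $\mathit{GG}(S)\subseteq \mathit{DT}(S)$. The paper shrinks the empty diametral disc $D$ by two successive homotheties (first centered at $v_1$ until tangent to $s_2$ at a point $t$, then centered at $t$ until tangent to $s_1$), producing an empty disc $D''\subseteq D$ tangent to both sites whose center therefore lies on a Voronoi edge. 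You instead slide the center along the diameter: the intermediate value theorem applied to $f(x)=d(x,s')-d(x,s'')$ gives a point $q$ equidistant from both sites, and the $1$-Lipschitz property of distance functions ($d(q,s)\ge d(p_0,s)-t\ge r_0-t\ge\rho$) shows the disc $B(q,\rho)$ is an empty witness tangent to both. Both routes are valid; the homothety argument gets emptiness for free from set containment $D''\subseteq D$, while your metric argument gives a center with explicit quantitative control, avoids having to verify that the second homothety preserves tangency at $t$, and would generalize verbatim to arbitrary disjoint closed convex sites. One small imprecision: your parenthetical justification of circle-based monotonicity ("the two discs over the fixed chord grow") is not literally true --- the two individual discs with chord $v'v''$ are \emph{not} nested as $\beta$ increases (the part of the smaller disc below the chord escapes the larger disc on the same side); only their union is nested, and proving that requires exactly the observation the paper makes, namely that two distinct circles meet in at most two points, so each arc of the smaller neighborhood's boundary, once its endpoints and one interior point are inside the larger union, stays inside. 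The monotonicity claim itself is standard and true, so this does not affect correctness, but the one-line justification should be repaired.
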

\begin{proof} 
First we prove that $\mathit{GG}(S) \subseteq \mathit{DT}(S)$.
Let $v_1 \in s_1, v_2 \in s_2$ be such a pair of points that there exists a disc $D$ 
with diameter $v_1v_2$ containing no points belonging to segments 
from $S \setminus \{s_1, s_2\}$ inside of it. 
We transform $D$ under a homothety with respect to $v_1$ so that its image $D'$ is 
tangent to $s_2$ in the point $t$. Then we transform $D'$ under a homothety with respect 
to $t$ so that its image $D''$ is tangent to $s_1$ (see Figure \ref{fig:gg-dt}). 
The disc $D''$ lies inside of $D$, i.e., it does not intersect segments 
from $S \setminus \{s_1, s_2\}$, and it is tangent to $s_1$ and $s_2$ , so the center of $D''$ lies on the Voronoi Diagram $VD(S)$ edge. Hence, if the edge $s_1s_2$ belongs to $\mathit{GG}(S)$ then it also belongs 
to $\mathit{DT}(S)$. 

The last inclusion is based on a fact that for $1\leq \beta < \beta'$ and for any two points $v_1,v_2$ it is true that $N(v_1,v_2,\beta) \subseteq N(v_1,v_2,\beta')$ (see \cite{kr85}).

%Let $c_1, c_2$ be centers of discs determining a lune $N \in N(s_1,s_2,\beta)$.
%Let $h_v^f$ denote a homothety with respect to a point $v$ and a ratio $f$.
%Let $f = \frac{\beta'}{\beta}$, $c'_1=h_{v_1}^f(c_1)$ and $c'_2=h_{v_1}^f(c_2)$. 
%Let $c_1'$ ($c_2'$, respectively) be an image of $c_1$ ($c_2$, respectively)
%by homothety with the ratio $\frac{\beta'}{\beta}$ with respect to point $v_1$ 
%($v_2$, respectively).
%Then $c_1', c_2'$ are centers of discs determining a lune $N' \in N(s_1,s_2,\beta')$
%and $N \subseteq N'$. Hence $G_{\beta'}(S) \subseteq G_{\beta}(S)$. 

%The last inclusion can be proved (due lemma \ref{mst-in-dt}) in the same way 
%as in the paper by Kirkpatrick and Radke \cite{kr85}. 

The sequence of inclusions for circle-based $\beta$-skeletons is a straightforward 
consequence of the fact that two different circles intersect in at most two points.

\begin{figure}[htbp]
\centering
\includegraphics[scale=0.3]{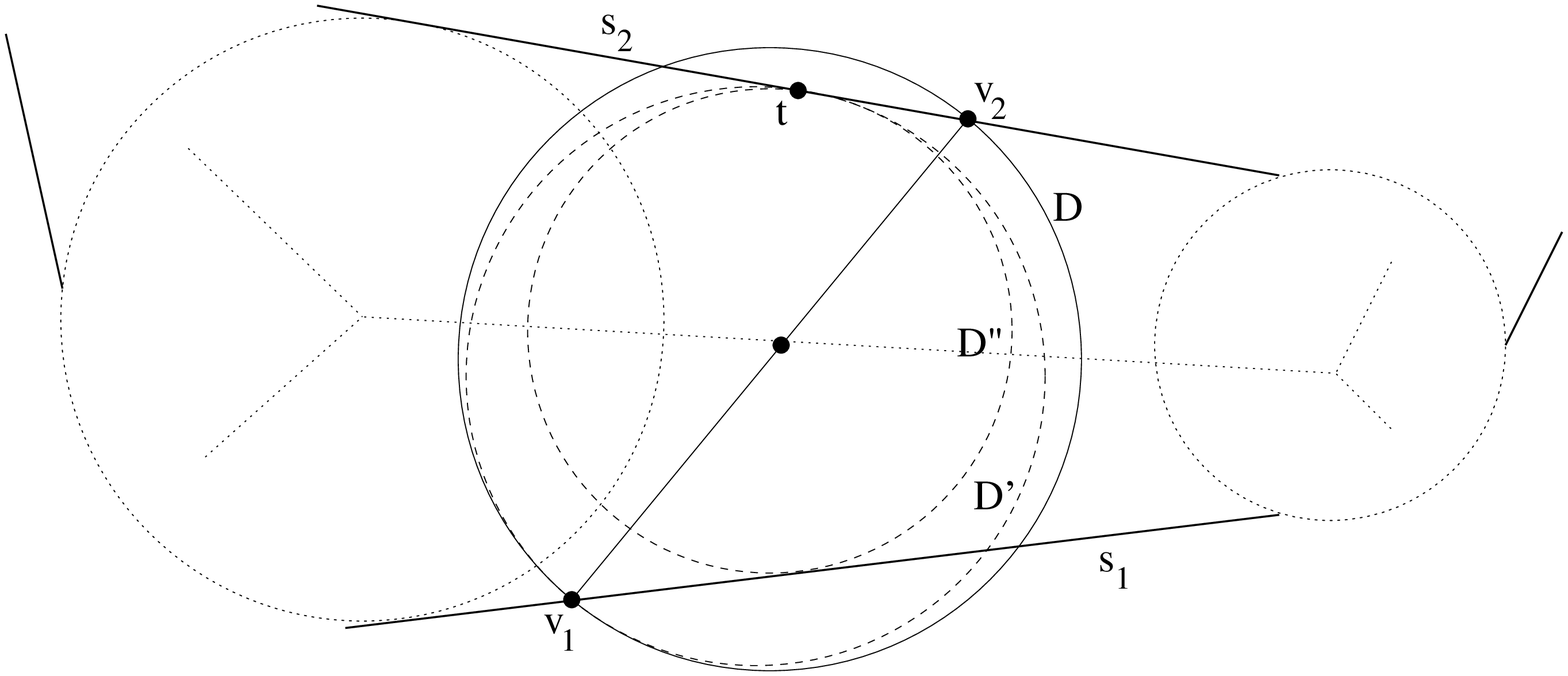}
\caption{$\mathit{GG}(S) \subseteq \mathit{DT}(S)$.}
\label{fig:gg-dt}
\end{figure}  

\end{proof}

\section{Algorithm for computing $\beta$-skeletons for $0 < \beta < 1$}

Let us consider a set $S$ of $n$ disjoint line segments in the Euclidean plane. 
First we show a few geometrical facts concerning $\beta$-skeletons 
$G_{\beta}(S)$.

%We assume that the segments are disjoint but the presented algorithm also works 
%for intersecting segments.
%For $0 < \beta < 1$, the $\beta$-skeleton is not included in $DT$.

The following remark is a straightforward consequence of the inscribed angle theorem. 

\begin{remark}
\label{lematkat}
For a given parameter $0 < \beta \leq 1$ if $v$ is a point on the boundary of  
$N(v_1,v_2, \beta)$, different from $v_1$ and $v_2$, then an angle $\angle v_1vv_2$ 
has a constant measure which depends only on $\beta$.
\end{remark}
%\begin{proof} 
%For a given pair $v_1,v_2$ the measure of such angle is constant for any $v$ since 
%inscribed angles subtended on the same arc $v_1v_2$ are equal. Now note that this angle 
%for a given pair $v_1,v_2$ depends only on the ratio of the lengths of the respective 
%two arcs on one of the discs defining the lune for $v_1,v_2$. From the definition 
%of the lune-based $\beta$-skeleton this ratio is constant and depends only on $\beta$.

%$\angle v_1vv_2$ is an inscribed angle subtended on the arc $v_1v_2$ of the disc 
%generating the lune $N(v_1,v_2,\beta)$. All such angles have the same measure. 
%For the fixed distance $d(v_1,v_2)$ a size of the angle depends on a radius 
%of the disc, i.e., it depends on $\beta$.
%\end{proof}

Let us consider a set of parametrized lines containing given segments. 
A line $P(s_i)$ contains a segment $s_i \in S$ and has a parametrization 
$q_i(t_i)=(x_1^i,y_1^i)+t_i \cdot [x_2^i-x_1^i,y_2^i-y_1^i]$, 
where $(x_1^i,y_1^i)$ and $(x_2^i,y_2^i)$ are ends of the segment $s_i$ 
and $t_i \in \mathbb{R}$. 

  Let respective points from segments $s_1$ and $s_2$ be generators of a lune and let  
%let $\delta$ be the measure of 
an inscribed angle determining a lune for a given value of $\beta$ be equal to $\delta$.
The main idea of the algorithm is as follows.
For any point $v_1 \in P(s_1)$ we compute points $v_2 \in P(s_2)$ for which 
there exists a point $v \in P(s)$, where $s \in S \setminus \{s_1, s_2\}$,
such that 
%an angle $\angle v_1vv_2$ has a measure not smaller than $\delta$ and not greater than $\pi$. 
%such that an angle $\angle v_1vv_2$ has a measure equal to the measure of inscribed
%angle determing a lune for a given value of $\beta$ and its intersection with 
%the segment $s$ is not empty. 
%We consider both positions of the inscribed angle
%(for arcs of both circles), i.e., we find points $v_2$ such that 
%$\delta \leq \angle v_1vv_2 \leq 2\pi-\delta$ (see Figure \ref{fig:hiper}).
$\delta \leq \angle v_1vv_2 \leq 2\pi-\delta$, i.e., $v \in N(v_1,v_2,\beta)$
(see Figure \ref{fig:hiper}).  
Then we analyze a union of pairs of neighborhoods generators for all $s \in S \setminus \{s_1, s_2\}$. 
If this union contains all pairs of points $(v_1,v_2)$, where $v_1 \in s_1$ and $v_2 \in s_2$,
then $(s_1,s_2) \notin G_{\beta}(S)$.

For a given $t_1 \in \mathbb{R}$ and a segment $s \in S \setminus \{s_1,s_2\}$
we shoot rays from a point $v_1=q_1(t_1) \in P(s_1)$ towards $P(s)$.
Let us assume that a given ray intersects $P(s)$ in a point 
 $v=q(t)=(x_1,y_1)+t \cdot [x_2-x_1,y_2-y_1]$ for some value of $t \in \mathbb{R}$. 
Let $w(t) = \overrightarrow{v_1v}$ be the vector between points $v_1$ and $v$. 
Then $w(t)=[A_1t+B_1t_1+C_1,A_2t+B_2t_1+C_2]$ where coefficients $A_i,B_i,C_i$ for $i=1,2$ 
%are fixed that 
depend only on endpoints coordinates of segments $s_1$ and $s$.
%The ray refracts in $v$ from the segment $s$ in a special way. 
%The sum of the angle of incidence and the angle of refraction is equal to $\delta$. 
%$= \alpha(\beta)$ for a lune generated with a given value of $\beta$.
The ray refracts in $v$ from $P(s)$ in such a way that the angle between
directions of incidence and refraction of the ray is equal to $\delta$. 
%The parametrized equation of the refracted ray is 
%$r(z,t)=v+z \cdot R^{cw}_{\delta}w(t)$ for $z \geq 0$ 
%(or $r(z,t)=v+z \cdot R^{ccw}_{\delta}w(t)$ for $z \geq 0$) 
%where $R^{cw}_{\delta}$ ($R^{ccw}_{\delta}$, respectively) denote a rotation matrix 
%for a clockwise (counter-clockwise, respectively) angle $\delta$. 
The parametrized equation of the refracted ray is 
$r(z,t)=v+z \cdot R_{\delta}w(t)$ for $z \geq 0$ 
(or $r(z,t)=v+z \cdot R'_{\delta}w(t)$ for $z \geq 0$, respectively) 
where $R_{\delta}$ ($R'_{\delta}$, respectively) denotes a rotation matrix 
for a clockwise (counter-clockwise, respectively) angle $\delta$. 
If refracted ray $r(z,t)$ intersects line $P(s_2)$ in a point $q_2(t_2)=r(z,t)$ 
(it is not always possible - see Figure \ref{fig:hiper}) then we compare the 
$x$-coordinates of $q_2(t_2)$ and $r(z,t)$. 
As a result we obtain a function containing only parameters $t_1$ and $t_2$:  
$z=\frac{J \cdot t_2+K \cdot t_1+L}{D \cdot t+E \cdot t_1+F}$, 
where coefficients $J = -(x_2-x_1) ,K = x_2^2-x_1^2 ,L = x_1^2-x_1 ,
D = A_1 \cos \delta + A_2 \sin \delta,E = B_1 \cos \delta + B_2 \sin \delta,
F = C_1 \cos \delta + C_2 \sin \delta$ are fixed.
Since $y$-coordinates of $q_2(t_2)$ and $r(z,t)$ are also equal we obtain 
$t_2(t)=\frac{M \cdot t^2+p_1(t_1) \cdot t+p_2(t_1)}{N \cdot t+p_3(t_1)}$, 
where $p_1,p_2$ and $p_3$ are (at most quadratic) polynomials of variable 
$t_1$ and $M,N$ are fixed (the exact description of those polynomials and variables
is much more complex than the description of the coefficients in the previous step and it is omitted here 
- see them in Appendix).
 
%Note that if we consider the clockwise and the counter-clockwise refraction separately
%then for a given point $q_1(t_1)$ and a given segment $s$, the graph of function $t_2$ 
%with respect to $t$ consists of parts of a hyperbola.  
%We analyze a graph of correlations between variables $t$ and $t_2$ 
%(i.e., a set of pairs $(t,t_2(t))$ for fixed $t_1$) for both kinds 
%of refractions - see Figure \ref{fig:hiper}.

 Let $l^\delta_{t_1}(t)$ denote a value of the parameter $t_2$ of the intersection point
of the line $P(s_2)$ and the line containing the ray that starts in $q_1(t_1)$ and refracts
in $q(t)$ creating an angle $\delta$. Let $k^\delta_{t_1} = l^\delta_{t_1}\huge |_I$, where $I$ is 
a set of values of $t$ such that the ray refracted in $q(t)$ intersects $P(s_2)$. 
The function $l^\delta_{t_1}$ is a hyperbola and the function $k^\delta_{t_1}$ is a part of it
(see Figure \ref{fig:hiper}).

%For a given $t_1$ we want to find correlation between $t_2$ 
%and $t$ for $t \in \mathbb{R}$. 

\begin{figure}[htbp]
\centering
\includegraphics[scale=0.34]{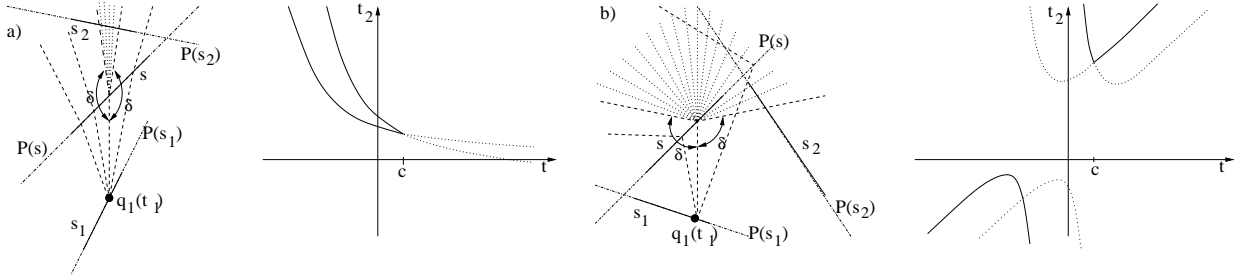}
\caption{Examples of correlation between parameters $t$ and $t_2$ (for a fixed $t_1$) 
for a presented composition of segments and (a) a refraction angle near $\pi$ 
(dotted lines show refracted rays that are analyzed) and (b) near $\frac{\pi}{2}$. 
The value $c$ corresponds to the intersection point of lines $P(s)$ and $P(s_2)$.
 Dotted curves show a case when a line containing a refracted ray intersects 
$P(s_2)$ but the ray itself does not. }
\label{fig:hiper}
\end{figure}

 Note that for a given angle $\delta$ ($2\pi - \delta$, respectively) extreme points
of the function  $k^\delta_{t_1}$ ($k^{2\pi - \delta}_{t_1}$, respectively) do not have to 
belong to the set $\{0,1\}$. We can find them by computing a derivative
$\frac{dt_2}{dt}=\frac{MN \cdot t^2+2Mp_3(t_1) \cdot t+p_1(t_1)p_3(t_1)-Np_2(t_1)}
{(N \cdot t+p_3(t_1))^2}$.

Then we can compute the corresponding values of the parameter $t_2$. 
This way we obtain the pair $(t_1, t_2)$ such that the segment $q_1(t_1)q_2(t_2)$
is a chord of a circle that is tangent to the analyzed segment $s$ in $q(t)$ and
$\angle q_1(t_1)q(t)q_2(t_2) = \delta$ ($\angle q_1(t_1)q(t)q_2(t_2) = 2\pi - \delta$,
respectively).  

%Moreover, for some values of $t$ the refracted ray $r(z,t)$ does not intersect line 
%$P(s_2)$ (both in the case when we refract the ray clockwise and counter-clockwise). 

%When we consider both clockwise and counter-clockwise refraction the points we get will lie 
%on two hyperbolas and the asymptotes of those hyperbolas will determine all such values of $t$.
%Now, to check if segment $s$ eliminates all lunes $N_2^s(s_1, s_2,\beta)$ 

%Let $T(t_1,s,s_2)$ be a set of all $t_2$ such that, for given $t_1$, points $q_1(t_1)$ 
%and $q_2(t_2)$ generate a lune intersected by segment $s$.  

 Let $T(t_1,s,s_2) = \bigcup_{\gamma \in [\delta, 2\pi - \delta], t \in [0, 1]} k^\gamma_{t_1}(t)$,
i.e., this is a set of all $t_2$ such that points $q_1(t_1)$ and $q_2(t_2)$ generate 
a lune intersected by the analyzed segment $s$.
Let $F(s_1,s,s_2)= \bigcup_{t_1 \in R, x \in T(t_1,s,s_2)}(t_1,x)$ be a set of pairs 
of parameters $(t_1,t_2)$ such that the segment $s$ intersects a lune generated by points 
$q_1(t_1)$ and $q_2(t_2)$.
The set $F(s_1,s,s_2)$ is an area limited by $O(1)$ algebraic curves 
of degree at most $3$. 
%(i.e., hyperbolas for $t=0$ and $t=1$ and curves corresponding 
%to extreme points of analyzed hyperbolas and their intersection points). 
 The curves match the values of the parameter $t_2$ corresponding 
to extreme points of $k^\delta_{t_1}$ ($k^{2\pi - \delta}_{t_1}$, respectively). In particular 
there are hyperbolas for angles $\delta$ and $2\pi - \delta$ correlated with the rays refracted 
in the ends of the segment $s$ (for parameters $t = 0$ and $t = 1$) - see 
Figure \ref{fig:polygon}.

\begin{figure}[htbp]
\centering
\includegraphics[scale=0.33]{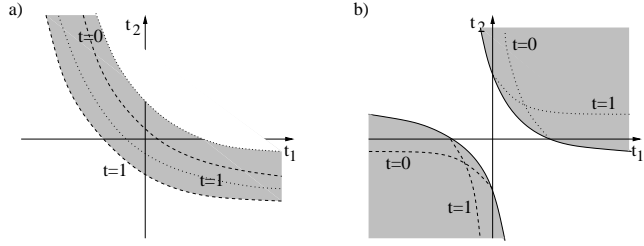}
\caption{ Examples of sets $F(s_1,s,s_2)$ for $\beta$ near (a) $0$ and (b) $1$ 
(the shape of $F(s_1,s,s_2)$ also depends on the position of the segment $s$ with respect 
to $s_1$ and $s_2$). Dotted (dashed, respectively) curves limit the area corresponding 
to rays refracted through the segment $s$ and creating the angle $\delta$ ($2\pi - \delta$, 
respectively).} 
\label{fig:polygon}
\end{figure}

\begin{lemma}
The edge $s_1,s_2$ belongs to the $\beta$-skeleton $G_{\beta}(S)$ if and only if \\
$[0,1] \times [0,1] \setminus \bigcup_{s \in S \setminus \{s_1,s_2\}} F(s_1,s,s_2) \neq \emptyset$. 
\end{lemma}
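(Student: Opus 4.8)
The plan is to recognize the lemma as a faithful restatement of the definition of $G_\beta(S)$ once pairs of generators are encoded by their parameters, so that the real work is to confirm that $F(s_1,s,s_2)$ describes exactly those generator pairs whose lune meets $s$.

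First I would fix the identification $(t_1,t_2)\leftrightarrow (v_1,v_2)=(q_1(t_1),q_2(t_2))$, which maps $[0,1]^2$ bijectively onto $s_1\times s_2$: the restriction to the unit box is precisely what forces $v_1\in s_1$ and $v_2\in s_2$ rather than arbitrary points of the supporting lines $P(s_1),P(s_2)$. Since the sites are disjoint we always have $v_1\neq v_2$, so $N(v_1,v_2,\beta)$ is well defined for every such pair.

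The core step is the equivalence $(t_1,t_2)\in F(s_1,s,s_2)\iff s\cap N(q_1(t_1),q_2(t_2),\beta)\neq\emptyset$, which I would derive from the construction preceding the lemma together with the characterization $v\in N(v_1,v_2,\beta)\iff \delta\le\angle v_1vv_2\le 2\pi-\delta$ (a consequence of Remark~\ref{lematkat}). For the forward direction, $(t_1,t_2)\in F$ means $t_2\in T(t_1,s,s_2)$, i.e.\ $t_2=k^\gamma_{t_1}(t)$ for some admissible angle $\gamma\in[\delta,2\pi-\delta]$ and some $t\in[0,1]$; by construction the point $q(t)\in s$ then satisfies $\angle q_1(t_1)q(t)q_2(t_2)=\gamma\ge\delta$, so $q(t)$ lies in the lune and $s$ meets it. Conversely, if some $v\in s$ lies in $N(v_1,v_2,\beta)$, the angle characterization gives $\angle v_1vv_2=\gamma\in[\delta,2\pi-\delta]$, so $v$ is exactly the refraction point of a ray from $v_1$ at angle $\gamma$ whose continuation passes through $v_2$; this yields a parameter value $t_2=k^\gamma_{t_1}(t)\in T(t_1,s,s_2)$ and hence $(t_1,t_2)\in F$. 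Taking the union over $s\in S\setminus\{s_1,s_2\}$ shows that $(t_1,t_2)\in\bigcup_s F(s_1,s,s_2)$ exactly when at least one site other than $s_1,s_2$ intersects the lune; therefore $(t_1,t_2)$ lies in the complement $[0,1]^2\setminus\bigcup_s F$ iff $\bigl(\bigcup_{s}s\bigr)\cap N(v_1,v_2,\beta)=\emptyset$. The complement is thus nonempty iff a generator pair $(v_1,v_2)\in s_1\times s_2$ with lune free of all other sites exists, which is by definition the condition for $s_1s_2\in G_\beta(S)$.

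The hard part will be the core equivalence, and within it the converse direction: one must check that every lune point of $s$ is genuinely captured by $T$, i.e.\ that restricting the refraction parameter to $t\in[0,1]$ (the segment $s$ itself rather than the line $P(s)$) and the angle to $[\delta,2\pi-\delta]$ discards no lune-intersecting configuration, and that degenerate situations --- an inscribed angle exactly $\delta$, or a supporting line of a refracted ray meeting $P(s_2)$ while the ray itself does not (cf.\ Figure~\ref{fig:hiper}) --- are treated consistently with the open/closed convention for $N$. The general position hypothesis removes the tangency degeneracies, confining these boundary effects to a measure-zero set that cannot change whether the complement is empty.
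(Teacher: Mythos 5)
Your proposal is correct and follows essentially the same route as the paper: the paper's own proof is a two-line unfolding of the definition of $F(s_1,s,s_2)$ as the set of parameter pairs $(t_1,t_2)$ whose lune is intersected by $s$, exactly the identification you make between $[0,1]^2$ and generator pairs. The only difference is that you additionally verify this defining property of $F$ via the inscribed-angle/refraction construction (which the paper simply takes as established by the preceding construction), so your argument is a more detailed version of the same proof rather than a different one.
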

\begin{proof}
If $[0,1] \times [0,1] \setminus \bigcup_{s \in S \setminus \{s_1,s_2\}} F(s_1,s,s_2) \neq \emptyset$
then there exists a pair of parameters $(t_1,t_2) \in [0,1] \times [0,1]$ such that 
a lune generated by points $q_1(t_1) \in s_1$ and $q_2(t_2) \in s_2$ is not intersected 
by any segment $s \in S \setminus \{s_1,s_2\}$, i.e., $(s_1,s_2) \in G_{\beta}(S)$.
The opposite implication can be proved in the same way.
\end{proof}

\begin{theorem}
\label{lambda4}
For $0 < \beta < 1$ the $\beta$-skeleton $G_{\beta}(S)$ can be found 
in $O(n^3 \lambda_4(n))$ time.
%, where $\lambda_4(n)$ denotes the maximum possible length 
%of a $(n,4)$ Davenport-Schinzel sequence. 
\end{theorem}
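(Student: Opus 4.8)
The plan is to convert the characterization of the preceding lemma into an algorithm and to bound its running time. Since $G_{\beta}(S)$ has only $O(n^2)$ potential edges, one for each pair $(s_1,s_2)$, it suffices to show that a single membership test — deciding whether $[0,1]\times[0,1]\setminus\bigcup_{s}F(s_1,s,s_2)\neq\emptyset$ — can be carried out in $O(n\,\lambda_4(n))$ time; multiplying by the $O(n^2)$ pairs then yields the claimed $O(n^3\lambda_4(n))$ bound. For a fixed pair I would first, in $O(1)$ time per segment $s\in S\setminus\{s_1,s_2\}$, produce the $O(1)$ algebraic arcs of degree at most $3$ bounding $F(s_1,s,s_2)$; these were already shown to exist, and each vertical slice $t_1=\mathrm{const}$ meets $F(s_1,s,s_2)$ in an interval whose endpoints $a_s(t_1),b_s(t_1)$ run along such arcs.

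The heart of the argument is the coverage test, i.e.\ deciding whether the $n-2$ regions leave some point of the unit square uncovered. I would reduce this to envelope computations over the boundary arcs. A point just below the lower envelope $\min_s a_s$, or just above the upper envelope $\max_s b_s$, is free whenever that envelope stays strictly inside $[0,1]$; the more delicate \emph{interior gaps}, bounded from below by some $b_s$ and from above by some $a_{s'}$, are detected by inserting the regions one at a time (or by a sweep in $t_1$) and maintaining the free part of each slice. The key combinatorial input is that every relevant pair of boundary arcs crosses at most twice. Because the arcs are only partially defined, an upper or lower envelope of these arcs then has only $O(\lambda_4(n))$ breakpoints — the order $4=2+2$ reflecting the at-most-two crossings together with the partial definition of the arcs — and building and scanning the envelopes needed to locate a free point costs $O(\lambda_4(n))$ per region, hence $O(n\,\lambda_4(n))$ over the $n-2$ regions of one pair.

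The step I expect to be the main obstacle is establishing the at-most-two-crossings property. B\'ezout's theorem alone only guarantees that two degree-$3$ curves meet in at most nine points, which would force a Davenport--Schinzel order far larger than $4$; to obtain $\lambda_4$ I must exploit the actual geometry of the bounding arcs — the refraction hyperbolas $l^\delta_{t_1}$ and the extremal-point curves — and show that, in the configurations that can occur on the boundary of the union, any two of them cross at most twice. A secondary difficulty is the bookkeeping of the interior gaps, so that no uncovered point is missed while the cost of maintaining them stays within $O(\lambda_4(n))$ per inserted region. Once both points are settled, summing $O(n\,\lambda_4(n))$ over the $O(n^2)$ pairs completes the proof.
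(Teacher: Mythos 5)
Your outer loop coincides with the paper's: $O(n^2)$ pairs $(s_1,s_2)$, and a per-pair emptiness test for $[0,1]\times[0,1]\setminus\bigcup_{s}F(s_1,s,s_2)$ in $O(n\,\lambda_4(n))$ time, which multiplies out to the claimed bound. The per-pair test, however, is where your argument has genuine gaps. First, the premise that each vertical slice $t_1=\mathrm{const}$ meets $F(s_1,s,s_2)$ in a single interval $[a_s(t_1),b_s(t_1)]$ is unjustified and false in general: the boundary functions $l^{\delta}_{t_1}$ are hyperbolas whose relevant pieces $k^{\delta}_{t_1}$ are only partially defined and have an asymptote at the parameter of $P(s)\cap P(s_2)$ (Figure~\ref{fig:hiper}), so a slice of $F(s_1,s,s_2)$ may consist of several, possibly unbounded, intervals. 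That much is repairable, since each $F(s_1,s,s_2)$ is bounded by $O(1)$ arcs and hence each slice has $O(1)$ components. The more serious problem is your treatment of the interior gaps: envelopes cannot detect a free point lying between two covering intervals, and the free set $[0,1]\times[0,1]\setminus\bigcup_{s}F(s_1,s,s_2)$ can have $\Omega(n^2)$ connected components (Figure~\ref{fig:omega4}), so your plan of ``maintaining the free part of each slice'' while inserting regions one at a time is, in disguise, an incremental construction of the entire arrangement of the boundary arcs. To charge $O(\lambda_4(\cdot))$ per insertion for that you need a zone-theorem argument for arcs, not the envelope bound you invoke, and your proposal never supplies it. This is exactly how the paper proceeds: it computes the full arrangement of the $O(n)$ degree-at-most-$3$ boundary curves in $O(n\,\lambda_4(n))$ time by the standard incremental algorithm (citing \cite{gr04}) and then extracts the difference within the same bound; no envelope computation or gap bookkeeping appears at all.

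Second, the pairwise at-most-two-crossings property that you correctly single out as the crux is left unproven, and without it your $\lambda_4$ (rather than, say, the much larger order that B\'ezout's bound of nine intersections for two cubics would force) is unsupported. You should know that the paper does not prove this property either: its $\lambda_4$ is inherited from the cited arrangement-construction theorem, whose hypothesis on the number of pairwise intersections is tacitly assumed for these particular refraction hyperbolas and extremal curves. So your instinct about where the real difficulty lies is sound, but in your write-up it remains an explicit unfilled hole on which the whole bound rests. A complete proof along your lines must either verify the pairwise-crossing bound for the specific arcs bounding $F(s_1,s,s_2)$, or, as the paper does, reduce the per-pair computation to the off-the-shelf incremental arrangement algorithm and make its intersection hypothesis explicit.
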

\begin{proof}
We analyze $O(n^2)$ pairs of line segments. For each pair of segments $s_1,s_2$ 
we compute $\bigcup_{s \in S \setminus \{s_1,s_2\}} F(s_1,s,s_2)$. 
For each $s \in S \setminus \{s_1,s_2\}$ we find a set of pairs of parameters
$t_1,t_2$ such that $N(q_1(t_1),q_2(t_2),\beta) \cap s \neq \emptyset$.
The arrangement of $n-2$ curves in total can be found in $O(n \lambda_4(n))$ time
\cite{gr04}. 
Then the difference 
$[0,1] \times [0,1] \setminus \bigcup_{s \in S \setminus \{s_1,s_2\}} F(s_1,s,s_2)$
can be found in $O(n \lambda_4(n))$ time (see Figure \ref{fig:omega4} in Appendix). 
Therefore we can verify which edges belong to $G_{\beta}(S)$ in $O(n^3 \lambda_4(n))$ time. 
\end{proof}

%\begin{figure}[htbp]
%\centering
%\includegraphics[scale=0.25]{./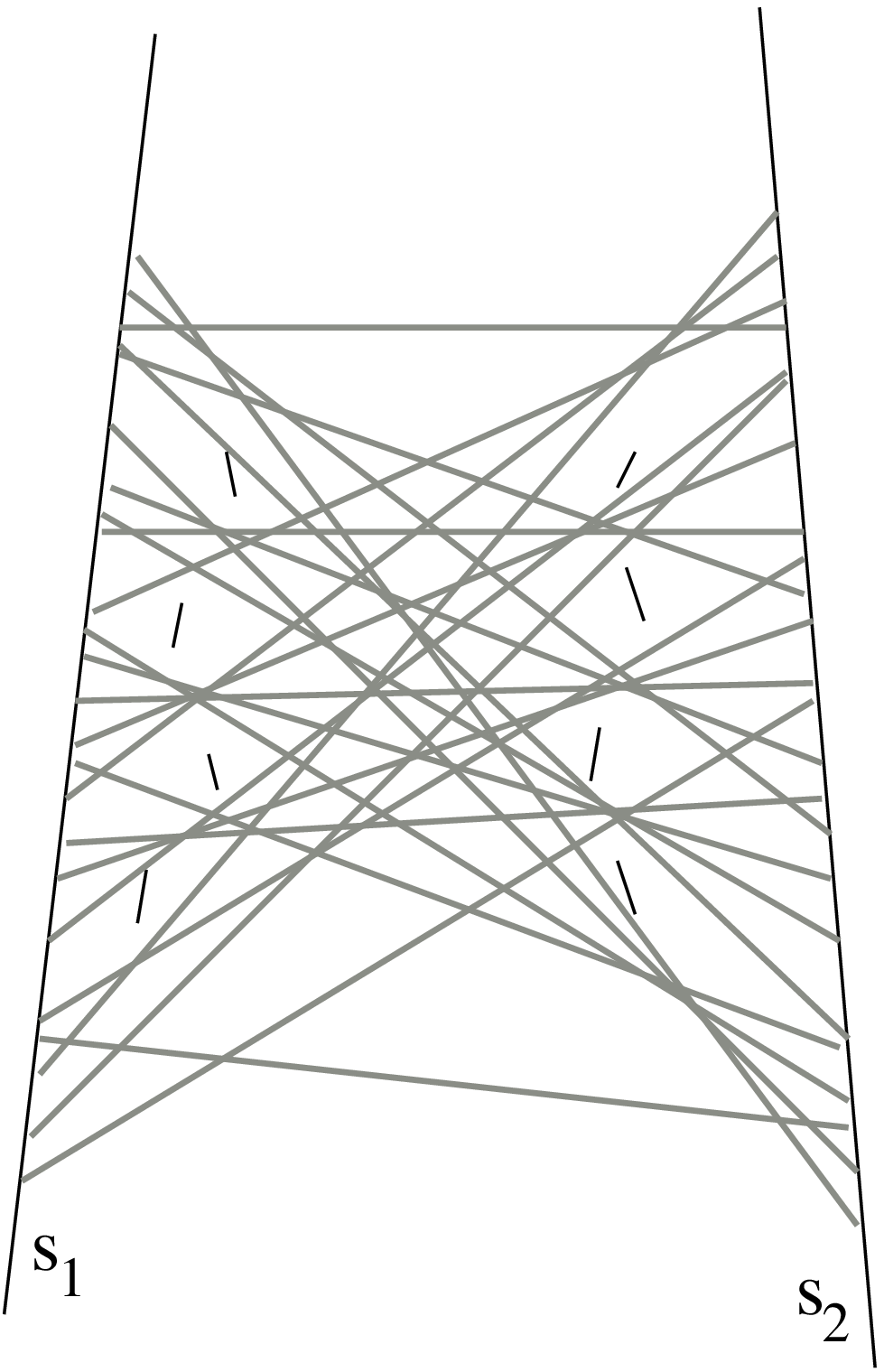}
%\caption{
%An example of the set $S$ where for segments $s_1, s_2$ the difference 
%$[0,1] \times [0,1] \setminus \bigcup_{s \in S \setminus \{s_1,s_2\}} F(s_1,s,s_2)$
%contains $\Omega(n^2)$ connected components (for very small $\beta$). 
%}
%\label{fig:omega4}
%\end{figure}

%Let $R(s_1,s_2,s)$ (or $R(s_2,s_1,s)$, respectively) be the graph of the function 
%$F(t_1)=T(t_1,s,s_2)$ (or $F(t_2)=T(t_2,s,s_1)$).
%Segment $s$ eliminates all lunes $N_2^s(s_1, s_2,\beta)$ if and only if set 
%$R(s_1,s_2,s) \cap R(s_2,s_1,s)$ covers the whole square $[0,1]^2$.
%We repeat this procedure for all segments $s \in S \setminus \{s_1,s_2 \}$ and check if the sum 
%of sets we got this way covers the whole square $[0,1]^2$. If not then there exists an empty lune 
%$N_p^s(s_1, s_2,\beta)$ and edge $s_1s_2$ belongs to the $\beta$-skeleton for $S$.

\section{Finding $\beta$-skeletons for $1 \leq \beta$}

Let us first consider the circle-based $\beta$-skeletons. According to Theorem \ref{luneinkluzja} for $1 \leq \beta$ there are only $O(n)$
edges which can belong to the $\beta$-skeleton for a given set of line segments.
  We will use this property to compute $\beta$-skeletons faster than in the previous 
section.  
%Moreover, the number of connected components of the set 
%$[0,1] \times [0,1] \setminus \bigcup_{s \in S \setminus \{s_1,s_2\}} F(s_1,s,s_2)$
%is smaller than for $\beta < 1$. Therefore we can compute $\beta$-skeletons faster
%than in the previous case. 

\begin{lemma}
\label{components}
For $1 \leq \beta$ and the set $S$ of $n$ line segments the number of connected components 
of the set $[0,1] \times [0,1] \setminus \bigcup_{s \in S \setminus \{s_1,s_2\}} F(s_1,s,s_2)$
is O(n) for any pair $s_1,s_2 \in S$.
\end{lemma}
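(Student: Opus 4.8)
The plan is to reduce the count to a union-complexity bound for a family of pseudodiscs. The first step is to trade the geometric neighbourhood for an angular one. For the circle-based skeleton with $\beta\ge 1$ the union of the two discs of radius $\frac{\beta d(v',v'')}{2}$ sharing $v'v''$ as a chord is exactly the lens $\{v:\angle v'vv''\ge\theta_\beta\}$, where $\sin\theta_\beta=1/\beta$ and hence $\theta_\beta\in(0,\pi/2]$; this is the inscribed-angle statement already used in Remark~\ref{lematkat}, now read for $\beta\ge 1$. Therefore a segment $s$ meets $N^c(q_1(t_1),q_2(t_2),\beta)$ iff $\max_{w\in s}\angle q_1(t_1)\,w\,q_2(t_2)\ge\theta_\beta$, and the object whose components we must bound is the sublevel set $\{(t_1,t_2)\in[0,1]^2:\Psi(t_1,t_2)<\theta_\beta\}$ of $\Psi(t_1,t_2)=\max_p\angle q_1(t_1)\,p\,q_2(t_2)$, the maximum taken over all points $p$ on the segments of $S\setminus\{s_1,s_2\}$.

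Next I would record a monotonicity coming from the parametrisations. For a fixed obstacle point $p$ the direction $\arg(q_i(t_i)-p)$ is strictly monotone in $t_i$, since $q_i(t_i)$ runs along a line and $p$ lies off it; thus $\angle q_1(t_1)\,p\,q_2(t_2)$ is, up to the choice of branch, the difference of a function of $t_1$ alone and a function of $t_2$ alone, and each single-point set $\{g_p\ge\theta_\beta\}$ splits into two monotone regions. Taking the union over $w\in s$, I would argue exactly as in Section~3 that every $F(s_1,s,s_2)$ is bounded by only $O(1)$ algebraic arcs: those recording an endpoint of $s$ on a boundary arc of the lens, and those recording tangency of $s$ to such an arc.

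The decisive step is to show that the regions $\{F(s_1,s,s_2)\}_{s\in S\setminus\{s_1,s_2\}}$ are pseudodiscs, i.e.\ the boundaries of any two of them cross at most twice. A crossing of $\partial F(s_1,s,s_2)$ and $\partial F(s_1,s',s_2)$ is a placement $(t_1,t_2)$ at which the lens $N^c(q_1(t_1),q_2(t_2),\beta)$ simultaneously touches $s$ and $s'$; I would bound the number of such placements by fixing $v_1=q_1(t_1)$, writing each boundary as a single monotone curve $t_2=\phi_s(t_1)$ via the monotonicity above, and using that two circles meet in at most two points -- the very mechanism invoked for the circle-based inclusions in Theorem~\ref{luneinkluzja} -- to limit how often $\phi_s$ and $\phi_{s'}$ can coincide. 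Granting the pseudodisc property, the Kedem--Livne--Pach--Sharir bound yields that $\bigcup_s F(s_1,s,s_2)$ has boundary complexity $O(n)$, so its complement in $[0,1]^2$ has $O(n)$ faces and hence $O(n)$ connected components. I expect this last step to be the main obstacle: the weaker conclusion of Section~3, with its $\lambda_4(n)$ factor, shows that for general $\beta$ the tangency curves may cross several times, so one must genuinely exploit $\beta\ge1$ to force the bound down to two, ruling out oscillation of the two curves and handling with care the places where the maximising point $w\in s$ passes between the interior of $s$ and an endpoint.
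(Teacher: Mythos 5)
Your reduction to the angular sublevel set and the $O(1)$-arc description of each $F(s_1,s,s_2)$ are fine (they match Section~3), but the proof breaks at precisely the step you yourself flag as the main obstacle: the pseudodisc property is never established, and the mechanism you offer for it does not apply. An intersection of $\partial F(s_1,s,s_2)$ with $\partial F(s_1,s',s_2)$ is a placement $(t_1,t_2)$ at which one of the circles bounding the neighborhood is tangent to $s$ and one is tangent to $s'$; the fact that two circles in the plane cross in at most two points says nothing about how many such placements the two-parameter family of neighborhoods admits, i.e.\ about how many times the two tangency loci cross in $(t_1,t_2)$-space. There is even internal evidence against a bound of two: the paper's algorithm for the circle-based case computes envelopes of these very curves in $O(n\alpha(n)\log n)$ time (a $\lambda_3$-type bound), and the lune-based analysis carries a $\lambda_4(n)$ factor; if the regions $F(s_1,s,s_2)$ were pseudodiscs, one would expect linear-complexity envelopes and no $\alpha(n)$ or $\lambda_4$ terms. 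So the Kedem--Livne--Pach--Sharir step, which carries your whole conclusion, rests on an unproved and doubtful claim. A smaller issue: your inscribed-angle characterization is specific to the circle-based neighborhood $N^c$; for the lune-based case with $\beta>1$ the neighborhood is an intersection of two discs and is not an angular sublevel set, so your setup does not cover it at all.

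For comparison, the paper avoids any per-pair crossing bound by a global charging argument. Since neighborhoods grow with $\beta$ (Kirkpatrick--Radke \cite{kr85}), each connected component of the free set $[0,1]\times[0,1]\setminus\bigcup_{s} F(s_1,s,s_2)$ shrinks as $\beta'$ increases and degenerates, at some critical $\beta'$, to a single point. At such a point either a disc forming the neighborhood is tangent to two segments $s,s'\in S\setminus\{s_1,s_2\}$, or a generator sits at an endpoint of $s_1$ or $s_2$ (constantly many components of the latter kind). In the former configuration $s_1,s_2,s,s'$ are the four sites nearest to the disc's center, so these critical points inject into features of the order-$4$ Voronoi diagram of $S$, whose complexity is $O(n)$ \cite{pz13}. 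This bounds the number of components over all obstacle pairs $(s,s')$ at once --- exactly what your per-pair pseudodisc route would need to, but cannot obviously, deliver.
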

\begin{proof}
 According to Theorem by Kirkpatrick and Radke \cite{kr85} for $1 \leq \beta < \beta'$ 
the following inclusion holds $G_{\beta'}(v) \subseteq G_{\beta}(V)$.
Therefore any neighborhood for $\beta'$ is included in some neighborhood for $\beta$ with 
the same pair of generators.
On the other hand, for a given parameter $\beta$ and a given connected component 
of the set $[0,1] \times [0,1] \setminus \bigcup_{s \in S \setminus \{s_1,s_2\}} F(s_1,s,s_2)$
there exists a sufficiently big $\beta'$ such that for $\beta'$ the component contains only one point
(we increase an arbitrary neighborhood corresponding to the connected component for 
a given $\beta$).
Hence, the number of one point components (for all values of $\beta$) estimates
the number of connected components for a given $\beta$.
But in this case at least one disc forming the neighborhood is tangent to two
segments different than $s_1$ and $s_2$ or at least one generator of the neighborhood
is at the end of $s_1$ or $s_2$.
In the first case the two segments tangent to the disc and segments $s_1, s_2$ are the
the closest ones to the center of the disc. Therefore the complexity
of the set of such components does not exceed the complexity of the $4$-order
Voronoi diagram for $S$, i.e., it is $O(n)$ \cite{pz13}.
In the second case there is a constant number of additional components.   

%Let us consider neighborhoods for circle-based $\beta$-skeletons.
%The neighborhood is {\bf an union of} two discs. The position of one of them uniquely defines 
%the position of the second one. Let us consider neighborhoods generated by edges between
%$s_1$ and $s_2$ corresponding to points on the boundary of a given connected component.
%The {\bf neighborhood} is defined by at most four segments ({\bf the corresponding discs 
%intersect $s_1, s_2$ in points generating the neighborhood and are tangent to at most two
%other segments}).
%Those segments are the closest ones to the center of the disc. Therefore the complexity
%of the set of connected components does not exceed the complexity of the $4$-order
%Voronoi diagram for $S$, i.e., it is $O(n)$ \cite{pz13}.    
\end{proof}

\begin{lemma}
\label{two-side}
For any $t_1 \in \mathbb{R}$ and $s_1, s_2 \in S$ there is at most one connected component of the set
$[0,1] \times [0,1] \setminus \bigcup_{s \in S \setminus \{s_1,s_2\}} F(s_1,s,s_2)$
that contains points with the same $t_1$ coordinate. 
\end{lemma}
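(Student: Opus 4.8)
The plan is to prove the stronger statement that each vertical slice
\[
\Omega_{t_1} := \bigl(\{t_1\}\times[0,1]\bigr)\setminus\textstyle\bigcup_{s\in S\setminus\{s_1,s_2\}} F(s_1,s,s_2)
\]
is connected, i.e. is a single (possibly empty) interval of $t_2$-values. Since a connected slice is contained in one connected component of $\Omega := [0,1]\times[0,1]\setminus\bigcup_s F(s_1,s,s_2)$, this immediately yields the lemma. Fixing $v_1=q_1(t_1)$, a value $t_2$ is ``good'' exactly when the lune $N(v_1,q_2(t_2),\beta)$ meets no site of $S\setminus\{s_1,s_2\}$, so it suffices to show that the set of good positions $v_2$ on $P(s_2)$ forms a connected sub-arc.

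First I would establish the membership inclusion: for $\beta\ge 1$ and any three collinear points $v_2',v_2,v_2''$ on $P(s_2)$ with $v_2$ between $v_2'$ and $v_2''$,
\[
N(v_1,v_2,\beta)\ \subseteq\ N(v_1,v_2',\beta)\cup N(v_1,v_2'',\beta).
\]
Granting this, if both endpoints are good then every site avoids the right-hand union, hence the middle lune, so $v_2$ is good as well; thus the good set has no interior gap and is an interval. To prove the inclusion I would write membership analytically: for the lune-based neighbourhood $w\in N(v_1,v_2,\beta)$ iff $h_1(v_2)\le 0$ and $h_2(v_2)\le 0$, where a short computation gives the affine function
\[
h_2(v_2)=(1-\beta)\,|v_1-w|^2+\beta\,(v_1-w)\cdot(v_2-w),
\]
together with the symmetric $h_1(v_2)=(1-\beta)\,|v_2-w|^2+\beta\,(v_2-w)\cdot(v_1-w)$. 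Since $h_2$ is affine, $h_2(v_2)$ is a convex combination of $h_2(v_2')$ and $h_2(v_2'')$, so $h_2(v_2)\le 0$ forces $h_2\le 0$ at (at least) one endpoint; the corresponding disc constraint is therefore inherited by that endpoint. The circle-based case is handled in the same spirit, with a union of two discs sharing $v_1v_2$ as a chord replacing the intersection.

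The main obstacle is the second constraint $h_1$: for $\beta>1$ the coefficient $1-\beta$ is negative, so $h_1$ is concave in $v_2$ and concavity gives the inequality in the wrong direction, so that $h_1(v_2)\le 0$ does not by itself propagate to an endpoint. Hence the crux is to show that the endpoint singled out by the affine constraint $h_2$ is the same one that inherits $h_1\le 0$; geometrically this amounts to controlling on which side the outer defining disc bulges as $v_2$ slides along $P(s_2)$, and this is exactly where the hypothesis $\beta\ge 1$ must be used — the lune is then ``fat'', containing the whole segment $v_1v_2$ in its interior, and the two defining discs enclose both generators, which is the feature that distinguishes this regime from the $\beta<1$ case of Section~3, where $h_1$ is convex and slices genuinely split into many intervals. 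I expect this step to rest on the monotonicity of both $|v_1-q_2(t_2)|$ and the bearing of $w$ seen from $q_2(t_2)$ as $t_2$ increases, and it is the one place where care is really needed.

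Should the direct inclusion prove too fragile, the fallback is a global connectivity argument: rather than fixing $t_1$, connect two good pairs $(t_1,t_2^{(1)})$ and $(t_1,t_2^{(2)})$ by a path inside $\Omega$ that is allowed to vary $t_1$. Using the characterization from the proof of Lemma~\ref{components} — that boundary points of $\Omega$ correspond to empty lunes whose defining discs become tangent to a third and fourth site, i.e. to features of the order-$4$ Voronoi diagram $VD(S)$ — one tracks a continuously shrinking empty lune and routes it through the adjacent Voronoi cells, reducing the claim to connectivity properties of $VD(S)$ already available from \cite{pz13}.
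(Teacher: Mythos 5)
Your reduction of Lemma \ref{two-side} to the claim that each vertical slice $\Omega_{t_1}$ is a single interval is the right idea, and your formulas for $h_1,h_2$ are correct; but the argument is developed for the wrong neighborhood, and the step you defer as ``the crux'' is not a technical difficulty to be handled with care --- it is false. This lemma is a statement about the \emph{circle-based} sets $F(s_1,s,s_2)$ (it is invoked only in the circle-based algorithm), and the paper explicitly records, immediately after that algorithm, that the lemma \emph{fails} for lune-based neighborhoods (Figure \ref{fig:paski}). Correspondingly, your key inclusion $N(v_1,v_2,\beta)\subseteq N(v_1,v_2',\beta)\cup N(v_1,v_2'',\beta)$ is false for lunes with $\beta>1$: take $v_1=(0,0)$, $v_2=(0,1)$, $v_2'=(-a,1)$, $v_2''=(a,1)$ with $a$ small and $\beta$ large, and let $w=(x_0,\frac{1}{2})$ with $x_0$ just below $\frac{\sqrt{2\beta-1}}{2}$, so that $w$ lies in $N(v_1,v_2,\beta)$. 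A direct computation of the two disc constraints for each outer lune shows that the discrepancy term $x_0\beta a\approx \beta^{3/2}a/\sqrt{2}$ dominates, so $w$ violates the disc centered at $(\beta/2)v_1+(1-\beta/2)v_2''$ for one endpoint and the disc centered at $(1-\beta/2)v_1+(\beta/2)v_2'$ for the other: the affine constraint propagates to one endpoint and the concave one to the other, which is exactly the bifurcation you flagged. No refinement of that analysis can close the gap, and your fallback (routing a shrinking lune through order-$4$ Voronoi cells) is too vague to substitute for it.

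The case you dismiss in one sentence as ``handled in the same spirit'' --- the union of two discs --- is in fact the entire content of the lemma, and the union structure is precisely what makes it true. For $\beta\geq 1$ the circle-based neighborhood has the inscribed-angle characterization $N^c(v_1,v_2,\beta)=\{w:\angle v_1wv_2\geq\delta\}$ with $\delta=\arcsin(1/\beta)\leq\frac{\pi}{2}$. Hence, for fixed $v_1=q_1(t_1)$ and a fixed point $w$ on another segment, the good set $\{v_2\in P(s_2): w\notin N^c(v_1,v_2,\beta)\}$ is the trace on $P(s_2)$ of the open cone with apex $w$, axis the ray from $w$ through $v_1$, and half-angle $\delta$; this cone is \emph{convex} precisely because $\delta\leq\frac{\pi}{2}$ (this is where $\beta\geq 1$ enters), so the trace is an interval, and an intersection of intervals over all $w$ is again an interval. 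This both proves the slice-connectivity you wanted and shows that your inclusion does hold in the circle-based setting, by quasi-concavity of $v_2\mapsto\angle v_1wv_2$ along a line rather than by the $h_1,h_2$ bookkeeping. The paper's own proof is the same fact in different clothing: it fixes the two points $b,b'\in P(s_2)$ at angle $\delta$ to $P(s_2)$ as seen from $a=q_1(t_1)$, notes that every bounding circle passes through $b$ or $b'$, and concludes that the good generator positions form one stretch between a leftmost and a rightmost position. Either route works, but only once the argument is anchored to $N^c$; as written, your proposal proves nothing for $N^c$ and attempts something false for $N$.
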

\begin{proof}
 Let the inscribed angle corresponding to $N^c(s_1,s_2,\beta)$ be equal to $\delta$.
Let $a = q_1(t_1)$ and $b \in P(s_2)$ ($b' \in P(s_2)$, respectively) be points 
such that the angle between $ab$ ($ab'$, respectively) and $P(s_2)$ is equal to $\delta$
(for $\delta = \frac{\pi}{2}$ we have $b = b'$), see Figure \ref{fig:two-side}.
 Boundaries of all neighborhoods $N^c(s_1,s_2,\beta)$ 
%having $a$ on its boundary intersect in $b$ or $b'$.
generated by $a$ and a point in $s_2$ contain either $b$ or $b'$. 
There exists the leftmost (rightmost, respectively) position (might be in infinity) 
of the second neighborhood generator with respect to the direction of $t_2$. Between 
those positions no neighborhood intersects segments from $S \setminus \{s_1, s_2\}$. 
 Hence, points corresponding to positions of such generators 
%of such neighborhoods 
belong to the same connected component of 
$[0,1] \times [0,1] \setminus \bigcup_{s \in S \setminus \{s_1,s_2\}} F(s_1,s,s_2)$. 
\end{proof}

\begin{figure}[htbp]
\centering
\includegraphics[scale=0.4]{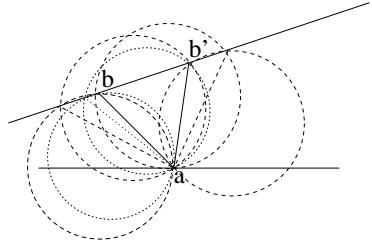}
\caption{Neighborhoods that have one common generator.}
\label{fig:two-side}
\end{figure}

The algorithm for computing circle-based $\beta$-skeletons for $\beta \geq 1$ is almost
the same as the algorithm for $\beta < 1$. 
%We use the divide and conquer method.
%However, there exist two sets of pairs $(t_1,t_2)$ 
%such that some analyzed segment from $S$ intersects a neighborhood 
%$N^c(q_1(t_1),q_2(t_2),\beta)$ (see Figure \ref{fig:circle}). 
%The sets are separated by a line $t_1=c$, for some $c \in \mathbb{R}$, and the first 
%of them is limited from below and the second one is limited from above.
%Therefore we can easy compute intersection of complements of those sets. 

\begin{theorem}
For $\beta \geq 1$ the circle-based $\beta$-skeleton $G^c_{\beta}(S)$ can be found 
in $O(n^2 \alpha(n) \log n)$ time. 
%where $\alpha(n)$ is the inverse Ackermann function.
\end{theorem}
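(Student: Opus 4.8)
The plan is to mirror the algorithm from the $0<\beta<1$ case but exploit the structural results in Lemmas~\ref{components} and \ref{two-side} to avoid the full $O(n\lambda_4(n))$ arrangement construction per segment pair. As before, we iterate over all $O(n^2)$ candidate pairs $s_1,s_2$; the claimed $O(n^2\alpha(n)\log n)$ bound forces us to spend only $O(\alpha(n)\log n)$ amortized time per pair, so we must never materialize the entire complement region explicitly.

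First I would recall that by Theorem~\ref{luneinkluzja}, for $\beta\ge 1$ the circle-based $\beta$-skeleton is a subgraph of $\mathit{DT}(S)$, so only $O(n)$ pairs $s_1,s_2$ can possibly yield an edge. I would therefore restrict attention to the $O(n)$ Delaunay-adjacent pairs rather than all $O(n^2)$ pairs, and for each such pair decide membership. By the Lemma's preceding statement, deciding membership of $(s_1,s_2)$ reduces to testing whether $[0,1]\times[0,1]\setminus\bigcup_{s} F(s_1,s,s_2)$ is nonempty. The key geometric simplification is Lemma~\ref{two-side}: for each fixed value of $t_1$ the surviving (uncovered) values of $t_2$ form a single interval (at most one connected component). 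This reduces the two-dimensional emptiness test to tracking, as $t_1$ sweeps across $[0,1]$, the upper and lower envelopes of the boundary curves of the sets $F(s_1,s,s_2)$.

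Next I would set up this as an upper/lower envelope computation. For each $s\in S\setminus\{s_1,s_2\}$, the boundary of $F(s_1,s,s_2)$ consists of $O(1)$ algebraic curves of bounded degree (the refraction hyperbolas described in Section~3). Because any two such bounded-degree curves cross $O(1)$ times, the lower envelope of the ``upper'' boundaries and the upper envelope of the ``lower'' boundaries are each Davenport--Schinzel sequences of bounded order, but crucially Lemma~\ref{two-side} guarantees that the uncovered region between these two envelopes stays connected in the $t_1$-direction. The nonemptiness test then becomes: does the lower envelope lie strictly above the upper envelope at some $t_1$? I would compute the relevant envelopes using a divide-and-conquer merge over the $n-2$ curve families; merging two envelopes of total size $O(k)$ takes $O(k\,\alpha(k))$ time (tracing intersections of bounded-degree arcs), giving the recurrence $T(n)=2T(n/2)+O(n\,\alpha(n))$, which solves to $O(n\alpha(n)\log n)$ per pair. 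Since Lemma~\ref{components} bounds the descriptive complexity of the relevant envelope by $O(n)$ (via the fourth-order Voronoi diagram argument), the per-pair cost is $O(n\alpha(n)\log n)$, and summing over the $O(n)$ Delaunay pairs yields the claimed $O(n^2\alpha(n)\log n)$ total.

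The hard part will be rigorously justifying the per-pair $O(n\alpha(n)\log n)$ envelope bound rather than the naive $O(n\lambda_4(n))$: one must argue that Lemma~\ref{two-side}'s single-component property lets us work with a genuine upper/lower envelope pair (each an $\alpha$-type Davenport--Schinzel structure of bounded order) instead of a general arrangement, and that Lemma~\ref{components}'s $O(n)$ complexity bound transfers to these envelopes. A secondary subtlety is handling the refracted rays that fail to reach $P(s_2)$ (the partial hyperbola $k^\delta_{t_1}$ versus the full $l^\delta_{t_1}$), so that the envelope endpoints at $t=0,1$ and at the vertical asymptote $c$ are treated correctly; these contribute only $O(1)$ extra breakpoints per segment and do not affect the asymptotic bound. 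Once these envelope-complexity claims are in place, the emptiness test and the overall time bound follow directly.
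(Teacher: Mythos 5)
Your proposal follows essentially the same route as the paper's proof: restrict attention to the $O(n)$ Delaunay edges via Theorem~\ref{luneinkluzja}, split each $F(s_1,s,s_2)$ into the part unbounded above and the part unbounded below in $t_2$, compute the union of each group as an envelope (the paper invokes Hershberger's algorithm \cite{h89}, you carry out the equivalent divide-and-conquer envelope merge with the same $O(n\alpha(n)\log n)$ per-pair cost), and use Lemma~\ref{two-side} to reduce edge membership to a nonemptiness test between the two envelopes, for $O(n^2\alpha(n)\log n)$ in total. The only cosmetic deviation is that you credit Lemma~\ref{components} with bounding the envelope complexity, whereas the paper's proof does not need that lemma at this point (the bound comes from the Davenport--Schinzel analysis of the envelope itself).
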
 
\begin{proof}
 Due to Theorem \ref{luneinkluzja} we have to analyze $O(n)$ edges of $\mathit{DT}(S)$.
%According to Lemma \ref{two-side} for each edge we use Hershberger's algorithm \cite{h89} 
%to compute the intersection of the complements of sets containing pairs of points generating not 
%empty neighborhoods. We find the lower envelope of curves intersecting the upper edge of the square
%$[0,1] \times [0,1]$ and the upper envelope of curves intersecting the lower edge of the square. 
%Then we intersect sets limited by those envelopes. 
 For $\beta \geq 1$ and for the given segments $s_1, s_2 \in S$ each set $F(s_1,s,s_2)$
can be divided in two sets with respect to the variable $t_1$. For each $t_1$ the first 
set contains part of $F(s_1,s,s_2)$ that is unbound from above with respect to $t_2$ 
and the second one contains part of $F(s_1,s,s_2)$ unbound from below.
The part that contains pairs $(t_1,t_2)$ such that the set of values of $t_2$ is $\mathbb{R}$ can be divided arbitrarily.
We use Hershberger's algorithm \cite{h89} to compute unions of sets for 
$s \in S \setminus \{s_1,s_2\}$ in each group separately. 
Then, according to Lemma \ref{two-side} we find an intersection of complements of computed
unions.  
It needs $O(n\alpha(n) \log n)$ time. Hence, the total time complexity of the algorithm 
is $O(n^2\alpha(n) \log n)$.  
\end{proof}

\begin{figure}[htbp]
\centering
\includegraphics[scale=0.33]{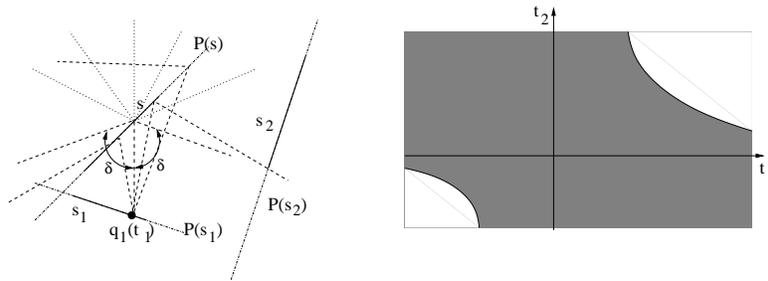}
\caption{An example of (a) refracted rays and (b) correlations between variables $t_1$ 
and $t_2$ for circle-based $\beta$-skeletons, where $\beta \geq 1$.
(the shape of $F(s_1,s,s_2)$ depends on the position of the segment $s$ with respect 
to $s_1$ and $s_2$)}
\label{fig:circle}
\end{figure}

Let us consider the lune-based $\beta$-skeletons now. Unfortunately, Lemma \ref{two-side}
does not hold in this case (see Figure \ref{fig:paski} in Appendix).

%\begin{figure}[htbp]
%\centering
%\includegraphics[scale=0.3]{./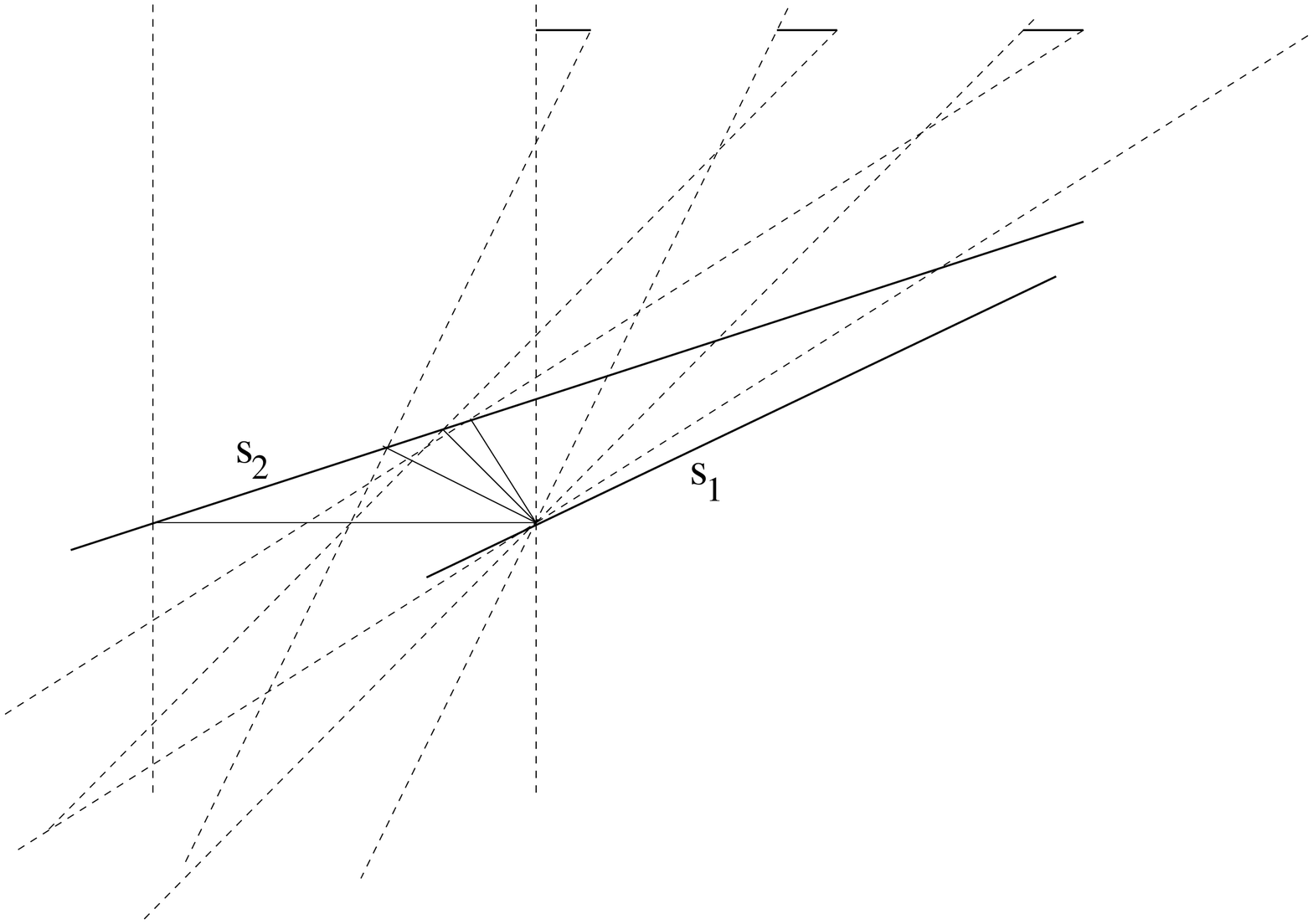}
%\caption{An example of the segment position (for $\beta$ near to infinity) that 
%the same point generates lunes which correspond to points in different connect components.}
%\label{fig:paski}
%\end{figure}   

According to Theorem \ref{luneinkluzja}, in this case we have to consider only $O(n)$
pairs of line segments in $S$ (the pairs corresponding to edges of $\mathit{DT}(S)$).
 We will analyze pairs of points belonging to given segments $s_1,s_2 \in S$
which generate discs such that each of them is intersected by any segment 
$s \in S \setminus \{s_1,s_2\}$.
We will consider $\beta$-skeletons for $\beta > 1$ (a $1$-skeleton is the same 
in the circle-based and lune-based case). 
Let $q_1(t_1) \in s_1$ and $q_2(t_2) \in s_2$ be generators of a lune 
$N(q_1(t_1),q_2(t_2),\beta)$ and let $C_1(q_1(t_1),q_2(t_2),\beta)$ be a circle creating 
a part of its boundary containing point $q_1(t_1)$. 

%For $\beta = 1$ the angle refracted in $q(t)$ is equal to $\frac{\pi}{2}$.
%For $\beta > 1$ the angle leaves right but we analyze a homothetic 
%(with respect to a point $q_1(t_1)$ and then to a point $q_2(t_2)$)  
%images of segments which could intersect 
%lunes generated by two given segments (see Figure \ref{fig:tales}).
We will shoot a ray from a lune generator and we will compute a possible position
of the second generator when the refraction point belongs to the lune.
Let an angle between a shot ray and a refracted ray be equal to $\frac{\pi}{2}$
and let $q(t) \in s \cap C_1(q_1(t_1),q_2(t_2),\beta)$.
Unfortunately, the ray shot from $q_1(t_1)$ and refracted in $q(t)$
does not intersect the segment $s_2$ in $q_2(t_2)$.
However, we can define a segment $s'$ such that the ray shot from $q_1(t_1)$ refracts
in $q(t)$ if and only if the same ray refracted in a point of $s'$ passes through $q_2(t_2)$
(see Figure \ref{fig:tales}).   
%Points $q_1(t_1), q_2(t_2)$ are on the diameter of the disc $D$ and a point $q(t)$ 
%lies on a circle of this disc.  
  
%First, we have to prove that the following lemma is true:
%\begin{lemma}
%For a given $\beta >1$ let $v$ be a point from the boundary of a lune $N_d(v_1,v_2, \beta)$, 
%different from $v_1,v_2$. The line passing through point $v_2$ that is also parallel 
%to the bisector of the segment $v_1v$ intersects this segment in the point $w$. 
%Now the ratio $\frac{d_2(v_1,w)}{d_2(v_1,v)}$ is constant and depends only on $\beta$.
%\end{lemma}

%{\bf Na rysunku dodac $c$ !!!.}

\begin{figure}[htbp]
\centering
\includegraphics[scale=0.3]{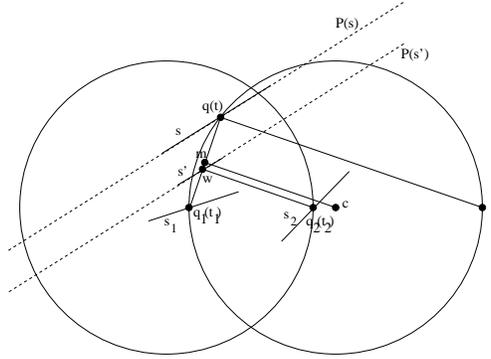}
\caption{ The auxiliary segment $s'$ and rays refracted in $q(t)$ and $w$. 
%The segment $s'$ is the homothetic image of the segment $s$ with the scale $\frac{1}{\beta}$.
}
\label{fig:tales}
\end{figure}

\begin{lemma}
 Assume that $\beta \geq 1$, $q_1(t_1) \in P(s_1)$ and $q_2(t_2) \in P(s_2)$,
where $s_1,s_2 \in S$. Let a point $q(t) \in P(s)$, where $s \in S \setminus \{s_1,s_2\}$, 
%lie on a circle containing a boundary of a lune $N(q_1(t_1),q_2(t_2),\beta)$ opposite to $q_2(t_2)$.
%lie on $N(q_1(t_1),q_2(t_2),\beta) \cap C_1(q_1(t_1),q_2(t_2),\beta)$. 
belong to $C_1(q_1(t_1),q_2(t_2),\beta)$. 
Let $l$ be a line perpendicular to the segment $(q_1(t_1),q(t))$, passing
through $q_2(t_2)$ and crossing $(q_1(t_1),q(t))$ in a point $w$. Then 
$\frac{d(q_1(t_1),w)}{d(q_1(t_1),q(t))} = \frac{1}{\beta}$.
    
%the ratio $\frac{d(q_1(t_1),w)}{d(q_1(t_1),q(t))}$ is constant and equal to $\frac{1}{\beta}$. } 
\end{lemma}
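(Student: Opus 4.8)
The plan is to reduce the statement to the intercept theorem (Thales), which is precisely what the auxiliary construction in Figure \ref{fig:tales} suggests. Write $v_1 = q_1(t_1)$, $v_2 = q_2(t_2)$ and $q = q(t)$ for brevity. First I would make the circle $C_1$ explicit. Recall from Definition \ref{betaskeleton} that for $\beta \geq 1$ the lune $N(v_1,v_2,\beta)$ is the intersection of two discs of radius $R = \frac{\beta}{2}\, d(v_1,v_2)$. A direct check shows that the disc whose boundary passes through $v_1$ is the one centered at $c = (1-\frac{\beta}{2}) v_1 + \frac{\beta}{2} v_2$: indeed $d(c, v_1) = \frac{\beta}{2}\, d(v_1,v_2) = R$, so this is exactly the circle $C_1(q_1(t_1), q_2(t_2), \beta)$. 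The crucial feature of this formula is that $c$ is an affine combination of $v_1$ and $v_2$, hence $c$ lies on the line through $v_1$ and $v_2$ with $d(v_1, c)/d(v_1, v_2) = \frac{\beta}{2}$.

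Next I would exploit that both $v_1$ and $q$ lie on $C_1$ (the latter by hypothesis), so the segment $(v_1, q)$ is a chord of $C_1$. Consequently the foot $m$ of the perpendicular dropped from the center $c$ onto the line through $v_1$ and $q$ is the midpoint of that chord, giving $d(v_1, m) = \frac{1}{2}\, d(v_1, q)$. By construction $w$ is the foot of the perpendicular from $v_2$ onto the same line $l$, so the two segments $(c, m)$ and $(v_2, w)$ are both orthogonal to the line $(v_1, q)$ and therefore parallel.

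The heart of the argument is then the intercept theorem applied to the two transversals $(v_1, v_2)$ and $(v_1, q)$ emanating from $v_1$, cut by the parallel lines $(c, m)$ and $(v_2, w)$. This yields $\frac{d(v_1, m)}{d(v_1, w)} = \frac{d(v_1, c)}{d(v_1, v_2)} = \frac{\beta}{2}$. Substituting $d(v_1, m) = \frac{1}{2}\, d(v_1, q)$ and rearranging gives exactly $\frac{d(v_1, w)}{d(v_1, q)} = \frac{1}{\beta}$, as claimed. Equivalently, one can bypass the synthetic step by noting that orthogonal projection onto the line $(v_1, q)$ is affine, so the projection $m$ of $c$ equals $(1-\frac{\beta}{2}) v_1 + \frac{\beta}{2} w$; comparing this with the fact that $m$ is the midpoint of $(v_1, q)$ yields $\overrightarrow{v_1 w} = \frac{1}{\beta}\,\overrightarrow{v_1 q}$ immediately.

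I expect the only real obstacle to be bookkeeping rather than depth: one must correctly identify which of the two defining discs is $C_1$ (i.e.\ verify that it is the center $(1-\frac{\beta}{2}) v_1 + \frac{\beta}{2} v_2$, not the other one, that puts $v_1$ on the boundary), and one should note that for $\beta > 2$ the point $c$ falls outside the segment $(v_1, v_2)$, so the intercept theorem must be read with directed ratios along the lines. Since $\frac{1}{\beta} \leq 1$, the foot $w$ nonetheless always lands on the chord $(v_1, q)$, consistent with the figure.
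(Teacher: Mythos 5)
Your proof is correct and follows essentially the same route as the paper's: both arguments identify that the relevant circle $C_1$ has its center on the line $q_1(t_1)q_2(t_2)$ at distance $\frac{\beta}{2}\,d(q_1(t_1),q_2(t_2))$ from $q_1(t_1)$, and then apply the intercept (Thales) theorem to two parallel perpendiculars to the chord $q_1(t_1)q(t)$. The only cosmetic difference is that you work with the center $c$ and the chord midpoint $m$, whereas the paper uses the antipodal point $x$ of $q_1(t_1)$ and the right angle at $q(t)$ subtended by the diameter --- the two configurations coincide under the homothety of ratio $2$ centered at $q_1(t_1)$.
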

\begin{proof} 
 Let $x$ be an opposite to $q_1(t_1)$ end of the diameter of $C_1(q_1(t_1),q_2(t_2),\beta)$.    
%a disc $D$ which defines a lune $N(q_1(t_1),q_2(t_2),\beta)$ and contains $q_1(t_1)$ on its boundary.
Then $d(q_1(t_1),x) = 2d(q_1(t_1),c)$, where $c$ is the center of $C_1(q_1(t_1),q_2(t_2),\beta)$.
From the definition of the $\beta$-skeleton follows that 
$\frac{d(q_1(t_1),q_2(t_2))}{d(q_1(t_1),x)}=\frac{d(q_1(t_1),q_2(t_2))}{2d(q_1(t_1),c)} \cdot
\frac{2d(q_1(t_1),q_2(t_2))}{2\beta d(q_1(t_1),q_2(t_2))}=\frac{1}{\beta}$.
%The bisector of $(q_1(t_1),q(t))$ contains the point $c$.
%The line parallel to $l$ passing through $c$ divides the segment 
%$(q_1(t_1),q(t))$ in half.
According to Thales' theorem $\frac{d(q_1(t_1),w)}{d(q_1(t_1),q(t))}=
\frac{d(q_1(t_1),q_2(t_2))}{d(q_1(t_1),x)}=\frac{1}{\beta}$ (see Figure \ref{fig:tales}). 

\end{proof}

%\begin{corollary}
%Let, for a given $\beta \geq 1$, $q_1(t_1) \in P(s_1)$ and $q_2(t_2) \in P(s_2)$,
%where $s_1,s_2 \in S$.  
%Let a point $w$ belongs to the segment $q_1(t_1)q(t)$ and 
%$\frac{d(q_1(t_1),w)}{d(q_1(t_1),q(t))}=\frac{1}{\beta}$.
%Then, a point $q(t) \in P(s)$, where $s \in S \setminus \{s_1,s_2\}$, 
%lies on a circle containing a boundary of a lune $N(q_1(t_1),q_2(t_2),\beta)$ opposite 
%to $q_2(t_2)$ if and only if $\angle q_1(t_1)wq_2(t_2) = \frac{\pi}{2}$. 
%\end{corollary}

%\begin{corollary}
%For Gabriel graphs the ratio is equal to $1$ and for relative neighborhood graphs 
%it is equal to $\frac{1}{2}$.
%\end{corollary}

 The algorithm computing a lune-based $\beta$-skeleton for $\beta \geq 1$ is similar 
to the previous one. Let $P(s') = h^{\frac{1}{\beta}}_{q_1(t_1)}(P(s))$, where
$h^{\frac{1}{\beta}}_{q_1(t_1)}$ is a homothety with respect to a point $q_1(t_1)$ 
and a ratio $\frac{1}{\beta}$.
Like in the case of circle-based $\beta$-skeletons we compute pairs of parameters $t_1, t_2$ 
such that the ray shot from $q_1(t_1)$ refracts in a point of $s'$ and intersects the segment
$s_2$ in $q_2(t_2)$, i.e., an analyzed segment $s$ intersects a disc limited by the circle 
$C_1(q_1(t_1),q_2(t_2),\beta)$. 

%We consider a line $P'(s)$ which is a homothetic image of the line $P(s)$ with respect 
%to a point $q_1(t_1)$ and a ratio $\frac{1}{\beta}$. 
%The ray shot from $q_1(t_1)$ refracts off $P'(s)$ at a right angle.
However, in the case of lune-based $\beta$-skeletons we analyze only one hyperbola 
(functions for clockwise and counterclockwise refractions are the same). 
Moreover, sets $F(s_1,s,s_2)$ and $F(s_2,s,s_1)$ are different. 
 They contain pairs of parameters $t_1, t_2$ corresponding to points generating discs 
such that each of them separately is intersected by the segment $s$.
Therefore, we have to intersect those sets to obtain a set of pairs of parameters
corresponding to points generating lunes intersected by $s$ (see Figure \ref{fig:right}). 

%In a such way we obtain unbounded sets, limited by $O(1)$ algebraic curves of degree 
%at most 3 which also have vertical and horizontal asymptotes .

\begin{figure}[htbp]
\centering
\includegraphics[scale=0.3]{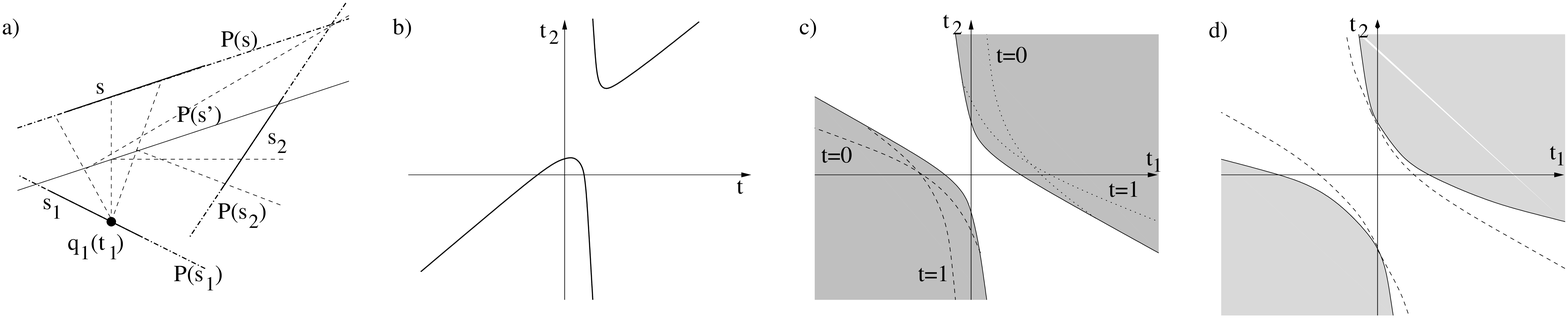}
\caption{An example of (a) a composition of three segments $s_1, s_2, s$, (b) correlations 
between variables $t$ and $t_2$ (parametrizing $s$ and $s_2$, respectively), 
(c) the set $F(s_1,s,s_2)$ and (d) the intersection $F(s_1,s,s_2) \cap F(s_2,s,s_1)$,
where $\beta > 1$. }
%(the shape of $F(s_1,s,s_2)$ depends on the position of the segment $s$ with respect 
%to $s_1$ and $s_2$)}
\label{fig:right}
\end{figure}

\begin{theorem}
For $\beta \geq 1$ the lune-based $\beta$-skeleton $G_{\beta}(S)$ can be found 
in $O(n^2 \lambda_4(n))$ time.
%, where $\alpha(n)$ is the inverse Ackermann function.
\end{theorem}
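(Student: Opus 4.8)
The plan is to follow the same scheme as in the proof of Theorem~\ref{lambda4}, but to exploit the inclusions of Theorem~\ref{luneinkluzja} to cut the number of candidate edges down to $O(n)$ and to account for the fact that Lemma~\ref{two-side} is no longer available. First I would compute the segment Delaunay triangulation $\mathit{DT}(S)$ in $O(n\log n)$ time; by Theorem~\ref{luneinkluzja} every edge of $G_{\beta}(S)$ is an edge of $\mathit{DT}(S)$, so it suffices to test the $O(n)$ pairs $(s_1,s_2)$ corresponding to edges of $\mathit{DT}(S)$.

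For a fixed such pair, and for each remaining segment $s\in S\setminus\{s_1,s_2\}$, I would build $F(s_1,s,s_2)$ -- the set of pairs $(t_1,t_2)$ for which the disc bounded by $C_1(q_1(t_1),q_2(t_2),\beta)$ is met by $s$ -- by applying the refraction construction of Section~3 to the auxiliary segment $P(s')=h^{1/\beta}_{q_1(t_1)}(P(s))$, as justified by the relation $d(q_1(t_1),w)/d(q_1(t_1),q(t))=1/\beta$ proved above; symmetrically I would build $F(s_2,s,s_1)$ for the disc bounded by $C_2$. As in Section~3 each of these two sets is bounded by $O(1)$ algebraic curves of bounded degree, and here only a single hyperbola per refraction angle occurs, since the clockwise and counter-clockwise refracted rays coincide. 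Because a point of $s$ lies in the lune $N(q_1(t_1),q_2(t_2),\beta)$ exactly when it lies in both discs, the locus of parameters for which $s$ meets the lune is $G(s_1,s,s_2):=F(s_1,s,s_2)\cap F(s_2,s,s_1)$, which is again delimited by $O(1)$ bounded-degree arcs. Mirroring the characterization used for Theorem~\ref{lambda4}, the edge $s_1s_2$ belongs to $G_{\beta}(S)$ iff $[0,1]\times[0,1]\setminus\bigcup_{s}G(s_1,s,s_2)\neq\emptyset$. For a single pair this leaves an arrangement of $O(n)$ regions, each bounded by $O(1)$ algebraic arcs of bounded degree, so any two boundary arcs meet $O(1)$ times; exactly as in the proof of Theorem~\ref{lambda4}, the union and the emptiness of its complement in the unit square can be decided in $O(n\lambda_4(n))$ time by constructing the corresponding arrangement \cite{gr04}. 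Summing over the $O(n)$ candidate pairs yields the claimed $O(n^2\lambda_4(n))$ bound.

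The main obstacle, and the reason the running time exceeds that of the circle-based case, is that Lemma~\ref{two-side} fails for lunes: a line $t_1=\textrm{const}$ may meet several connected components of the admissible region, so the decomposition into two envelope-bounded parts that enabled Hershberger's $O(n\alpha(n)\log n)$ union is unavailable, forcing the full $\lambda_4$-complexity arrangement computation. The technical point I expect to require the most care is checking that forming the intersection $F(s_1,s,s_2)\cap F(s_2,s,s_1)$ keeps the degree of the bounding curves constant and leaves only a bounded number of pairwise crossings, so that the $(n,4)$ Davenport--Schinzel bound of \cite{gr04} indeed controls the per-pair complexity.
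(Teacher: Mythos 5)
Your proposal follows essentially the same route as the paper's own proof: restrict attention to the $O(n)$ edges of $\mathit{DT}(S)$ via Theorem~\ref{luneinkluzja}, construct $F(s_1,s,s_2)$ and $F(s_2,s,s_1)$ through the homothety/Thales auxiliary-segment construction with a single hyperbola per refraction, intersect the two sets to pass from disc intersections to lune intersections, and settle each candidate pair by an $O(n \lambda_4(n))$ arrangement computation as in Theorem~\ref{lambda4}, giving $O(n^2 \lambda_4(n))$ in total. Even your closing observation --- that Lemma~\ref{two-side} fails for lunes, which blocks the Hershberger-style envelope method used in the circle-based case --- mirrors the paper's own remarks, so the proposal matches the published argument in both structure and detail.
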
 
\begin{proof}
$\beta$-skeletons for $\beta \geq 1$ satisfy the inclusions from Theorem \ref{luneinkluzja}.
Hence, the number of tested edges is linear. For each such pair of segments $s_1,s_2$ 
we compute the corresponding sets of pairs of points generating lunes that do not intersect 
segments from $S \setminus \{s_1,s_2\}$. 
Similarly as in Theorem \ref{lambda4} we can do it in $O(n \lambda_4(n))$ time.
Therefore, the total time complexity of the algorithm (after analysis of $O(n)$ pairs 
of segments) is $O(n^2 \lambda_4(n))$.   
\end{proof}

\section{Computing Gabriel Graph for segments}

%For a given pair of segments $s_1,s_2 \in S$ and for every segment 
%$s \in S \setminus \{s_1,s_2\}$ we consider points from the intersection 
%of the $2$-Voronoi region $2-VR(s_1,s_2)$ for $s_1, s_2$ and the $3$-Voronoi region 
%$3-VR(s_1,s_2,s)$ for $s_1,s_2,s$. 

 In the previous sections we constructed sets of all pairs of points generating 
neighborhoods that do not intersect segments other than the segments containing generators. 
Now we want to find only $O(n)$ pairs of generators (one pair for each edge 
of a $\beta$-skeleton) that define the graph.   
Let $2-VR(s_1,s_2)$ denote a region of the $2$-order Voronoi diagram for the set $S$
corresponding to $s_1, s_2$ and $3-VR(s_1,s_2,s)$ denote a region of the $3$-order Voronoi 
diagram for the set $S$ corresponding to $s_1,s_2,s$.
If an edge $s_1s_2$, where $s_1, s_2 \in S$ belongs to the Gabriel Graph then there 
exists a disc $D(p,r)$ centered in $p$, which does not contain points from 
$S \setminus \{s_1,s_2\}$ and its diameter is $v_1v_2$, where $v_1 \in s_1$, 
$v_2 \in s_2$ and $2r = d(v_1,v_2)$. 
The disc center $p$ belongs to the set $(2-VR(s_1,s_2)) \cap (3-VR(s_1,s_2,s))$ 
for some $s \in S \setminus\{s_1,s_2\}$.

 First, for segments $s_1,s_2 \in S$ we define a set of all middle points 
of segments with one endpoint on $s_1$ and one on $s_2$.
%centers of discs, for which there exists a diameter connecting $s_1$ and $s_2$. 
This set is a quadrilateral $Q(s_1,s_2)$ (or a segment if $s_1 \parallel s_2$) with vertices 
in points $\frac{(x_i^1,y_i^1)+(x_j^2,y_j^2)}{2}$, where $(x_i^k,y_i^k)$ 
for $i=1,2$ are endpoints of the segments $s_k$ for $k=1,2$ (boundaries of the set 
 are determined by the images of $s_1$ and $s_2$ under four homotheties with respect to the ends of those segments 
and a ratio $\frac{1}{2}$).

 Let us analyze a position of a middle point of a segment $l$ whose ends slide 
along the segments $s_1, s_2 \in S$. Let the length of $l$ be $2r$.  
We rotate the plane so that segment $s_1$ lies in the negative part of $x$-axis 
and the point of intersection of lines containing segments $s_1$ and $s_2$ 
(if there exists) is $(0,0)$.
 Let the segment $s_2$ lie on the line parametrized by
$u \cdot [x_1,y_1]$ for $0 \geq x_1, 0 \leq y_1, 0 \leq u$. Then the middle point of $l$ 
%with endpoint on $s_2$ in $u \cdot [x_1,y_1]$ 
is $(x,y)$, where $x=-|\sqrt{r^2-(\frac{uy_1}{2})^2}|+u \cdot x_1,y=\frac{uy_1}{2}$. 

%The other endpoint of the ladder is point 
%$(0,-2|\sqrt{r^2-(\frac{sy_1}{2})^2}|+s \cdot x_1)$ and we only consider such 
%points $(x,y)$ where the endpoints of the ladder belong to segments $s_1,s_2$ 
%respectively. 
Since $(x-2\frac{x_1}{y_1}y)^2+(y)^2=r^2-(\frac{uy_1}{2})^2+(\frac{uy_1}{2})^2=r^2$, 
then we have $x^2+y^2(1+4(\frac{x_1}{y_1})^2)-4\frac{x_1}{y_1}xy=r^2$, so all points 
$(x,y)$ for a given $r$ lie on an ellipse - see Figure \ref{fig:krzywa}.

\begin{figure}[htbp]
\centering
\includegraphics[scale=0.4]{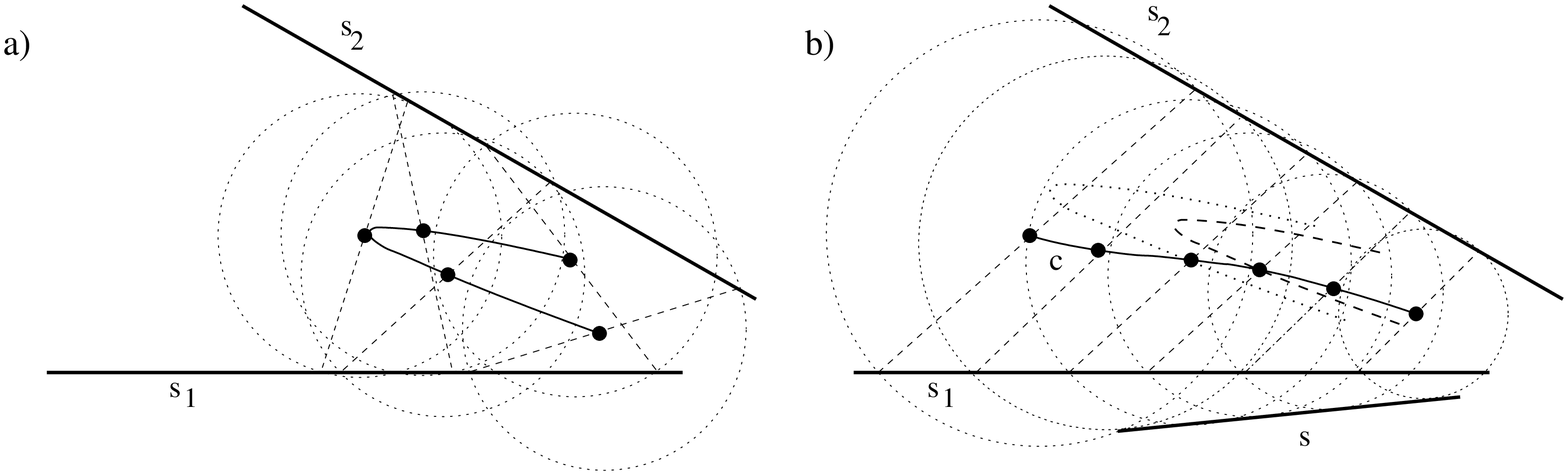}
\caption{(a) The set of middle points of segments $v_1v_2$, 
where $v_1 \in s_1$ and $v_2 \in s_2$} and (b) the curve $c$ such that the distance between a point on the curve $p$ and the segment $s$ is equal to the length of the radius of a corresponding disc centered in $p$.
\label{fig:krzywa}
\end{figure}

We want to find a point $p \in 3-VR(s_1,s_2,s)$ which is a center of a segment 
$v_1v_2$, where $v_1 \in s_1$ and $v_2 \in s_2$, and $d(p,s) > \frac{d(v_1,v_2)}{2}$.
 Then the disc with the center in $p$ and the radius $\frac{d(v_1,v_2)}{2}$
intersects only segments $s_1, s_2$, i.e., there exists an edge of $\mathit{GG}(S)$
between $s_1$ and $s_2$. 

We need to examine two cases. 
%First, let us consider the intersection of sets $T(s_1,s_2,s,r)$ for different 
%parameters $r$ and a stripe $M(s)$ defined by two lines, passing though segment $s$ 
%endpoints, perpendicular to this segment. 
First, we consider the situation when the closest to $p$ point of a segment $s$ belongs
to the interior of $s$.  
Let $P(s)$ be the line that contains segment $s$, which endpoints are $(x_1^s,y_1^s)$ 
and $(x_2^s,y_2^s)$, and let $q(t_s)=(x_1^s,y_1^s)+t_s \cdot [x_2^s-x_1^s,y_2^s-y_1^s]$ be 
the parametrization of $P(s)$. Let $L(s,r)$ be a line parallel to $P(s)$ with 
parametrization $l(t_L)=(x_1^L,y_1^L)+t_L\cdot [x_2^s-x_1^s,y_2^s-y_1^s]$ such that 
the distance between $P(s)$ and $L(s,r)$ is equal to $r$. We compute the intersection 
of the ellipse $x^2+y^2(1+4(\frac{x_1}{y_1})^2)-4\frac{x_1}{y_1}xy=r^2$ and the line $L(s,r)$. 
The result is 
$[x_1^L+t_L(x_2^s-x_1^s)]^2+[y_1^L+t_L(y_2^s-y_1^s)]^2-4\frac{x_1}{y_1}
[x_1^L+t_L(x_2^s-x_1^s)][y_1^L+t_L(y_2^s-y_1^s)]=r^2$, so $t_L$ satisfies an equation 
$At_L^2+Bt_L+C=r^2$ where coefficients $A,B,C$ are fixed and depend on 
$x_1,y_1,x_i^s,x_i^L,y_i^s,y_i^L$ for $i=1,2$.
This equation defines a curve $c$ (see Figure \ref{fig:krzywa}) which intersects 
corresponding ellipses. A point $p$ which belongs to a part of the ellipse that lies 
on the opposite side of the curve $c$ than the segment $s$ is a center of  a disc
which has a diameter $v_1v_2$, where $v_1 \in S_1$, $v_2 \in S_2$, and does not
intersect segment $s$. 

%For a given parameter $l$ if point $p_l \in T(s_1,s_2,s,r) \cap M(s)$ is the farthest 
%from segment $s$ then $p_l$ lies on the line parallel to $s$, tangent to set 
%$T(s_1,s_2,s,r)$.

%For intersection $T(s_1,s_2,s,r) \cap (\mathbb{R}^2 \setminus M(s))$ 
%For each set $T(s_1,s_2,s,r) \cap (\mathbb{R}^2 \setminus M(s))$ we find 
%the points of intersection of this set and discs $D_1(r)$ and $D_2(r)$. 
 In the second case one of the endpoints of the segment $s$ is the nearest point to $p$ (among the points from $s$). 
Let $D_1(r)$ and $D_2(r)$ be discs with diameter $r$ and with centers in corresponding 
ends of the segment $s$.  
We compute the intersection of $D_1(r)=\{(x,y):(x_1^s-x)^2+(y_1^s-y)^2=r^2\}$ 
($D_2(r)=\{(x,y):(x_2^s-x)^2+(y_2^s-y)^2=r^2\}$, respectively) and ellipse 
$x^2+y^2(1+4(\frac{x_1}{y_1})^2)-4\frac{x_1}{y_1}xy=r^2$. We obtain 
$x_1^s(x_1^s-2x)+y_1^s(y_1^s-2y)-y^2(\frac{x_1}{y_1})^2+4(\frac{x_1}{y_1})xy=0$, so 
$x=\frac{N_1y^2+N_2y+N_3}{N_4y+N_5}$ and $y$ satisfies an equation 
$M_1y^4+M_2y^3+M_3y^2+M_4y+M_5=0$ where coefficients $N_i$ and $M_j$ 
for $i,j=1, \dots, 5$ depend on $x_1^s,y_1^s,x_1,y_1,r$ (or on 
$x_2^s,y_2^s,x_1,y_1,r$, respectively). 
If there exists a point $p \notin D_1(r) \cup D_2(r)$ that belongs to 
the part of the ellipse between the segments $s_1,s_2$, then there also exists  
a disc with center in $p$ and a diameter $d(v_1,v_2)=2r$, where $v_1 \in s_1$ 
and $v_2 \in s_2$, which does not contain ends of the segment $s$.

In both cases we obtain a curve $c(r)$ dependent on the parameter $r$ 
- see Figure \ref{fig:krzywa}.
We check if a set $Q(s_1,s_2) \cap (2-VR(s_1,s_2)) \cap (3-VR(s_1,s_2,s))$
and the segment $s$ are on the same side of the curve $c$.
Otherwise, the segment $s_1s_2$ belongs to the Gabriel Graph for the set $S$
(i.e., there exists a point $p$ which is the center of a segment 
$v_1v_2$, where $v_1 \in s_1$, $v_2 \in s_2$, and $d(p,s) > \frac{d(v_1,v_2)}{2}$
for all $s \in S \setminus \{s_1,s_2\}$).

%We check if the points we found define empty Gabriel Graph lunes. 
%If so, then an edge exists between segments $s_1,s_2$ in $G_{1}^s(S)$. Else we check 
%a different segment $s \in S \setminus |\{s_1,s_2\}$.

\begin{theorem}
For a set of $n$ segments $S$ the Gabriel Graph $\mathit{GG}(S)$ can be computed in $O(n \log n)$ 
time.
\end{theorem}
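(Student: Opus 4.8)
The plan is to combine the inclusion $\mathit{GG}(S) \subseteq \mathit{DT}(S)$ from Theorem \ref{luneinkluzja} with the order-$k$ Voronoi structure of $S$, so that after an $O(n \log n)$ preprocessing the whole graph can be read off in linear time. First I would compute the segment Delaunay triangulation $\mathit{DT}(S)$ together with the order-$2$ and order-$3$ segment Voronoi diagrams of $S$. Each of these has linear complexity (the order-$k$ diagram for segments has size $O(k(n-k))$, so $O(n)$ for $k \in \{2,3\}$, which is the same bound invoked for the order-$4$ diagram in the proof of Lemma \ref{components}), and each can be built in $O(n \log n)$ time. By Theorem \ref{luneinkluzja} only the $O(n)$ edges of $\mathit{DT}(S)$ are candidates for $\mathit{GG}(S)$, so it suffices to decide membership for each of them.

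Next I would localize the witness search. If $s_1 s_2 \in \mathit{GG}(S)$, then its Gabriel disc has center $p$ equal to the midpoint of some chord $v_1 v_2$ with $v_1 \in s_1$, $v_2 \in s_2$, and radius $r = d(v_1,v_2)/2$; hence $p$ lies in the quadrilateral $Q(s_1,s_2)$ of such midpoints. Membership $p \in 2-VR(s_1,s_2)$ is in fact forced: since $d(p,s_2) \leq d(p,v_2) = r$, any site $s \neq s_1,s_2$ closer to $p$ than $s_2$ would satisfy $d(p,s) < r$ and so intersect the disc interior. The crucial reduction is then that emptiness need only be tested against the \emph{third}-nearest site: when $p \in 3-VR(s_1,s_2,s)$, every other site is at least as far from $p$ as $s$, so the disc avoids all of them iff it avoids $s$. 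This is exactly the condition captured by the curve $c(r)$ constructed above, treating separately the case where the foot of the perpendicular from $p$ onto $s$ is interior to $s$ and the case where it is an endpoint of $s$; the edge is realized by $p$ precisely when $p$ and $s$ lie on opposite sides of $c$.

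It then remains to test, for each candidate edge $s_1 s_2$ and each third site $s$ indexing a cell of the order-$3$ diagram that falls inside $2-VR(s_1,s_2)$, whether the region $Q(s_1,s_2) \cap (2-VR(s_1,s_2)) \cap (3-VR(s_1,s_2,s))$ contains a point on the empty side of $c$. Each such region is bounded by $O(1)$ line segments (the sides of $Q(s_1,s_2)$ and the Voronoi cell boundaries) and by the single low-degree curve $c$, so the feasibility test is $O(1)$. Summing over all cells, the number of triples $(s_1,s_2,s)$ that arise equals, up to a constant, the size of the order-$3$ segment Voronoi diagram, i.e. $O(n)$. Hence the tests cost $O(n)$ in total and the running time is dominated by the $O(n \log n)$ preprocessing.

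The step I expect to be the main obstacle is making the third-nearest-site reduction and the local test fully rigorous: one must verify that ``lying on the empty side of $c$'' characterizes exactly the midpoints whose Gabriel disc avoids $s$, reconcile the two cases (interior foot versus endpoint of $s$) into a single correct decision, and confirm that restricting attention to the order-$3$ cell inside each order-$2$ cell discards no Gabriel edge. The other point needing care is the aggregate bound: that the total number of (edge, third-site) incidences is linear, which rests squarely on the $O(n)$ size of the order-$3$ segment Voronoi diagram and on charging each test to a distinct cell of that diagram.
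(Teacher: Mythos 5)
Your proposal follows essentially the same route as the paper's own proof: reduce to the $O(n)$ candidate edges via $\mathit{GG}(S) \subseteq \mathit{DT}(S)$, build the order-$2$ and order-$3$ segment Voronoi diagrams in $O(n \log n)$ time, and decide each edge by testing whether $Q(s_1,s_2) \cap (2-VR(s_1,s_2)) \cap (3-VR(s_1,s_2,s))$ has a point on the empty side of the curve $c(r)$, charging the work to the linear total complexity of these cells. You additionally spell out justifications the paper leaves implicit (why $p \in 2-VR(s_1,s_2)$ is forced, why only the third-nearest site must be tested), which strengthens rather than changes the argument; the only slight imprecision is calling each cell's boundary $O(1)$, but your aggregate $O(n)$ charging makes this harmless.
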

\begin{proof}
The $2$-order Voronoi diagram and the $3$-order Voronoi diagram can be found in $O(n \log n)$ 
time \cite{pz13}. The number of triples of segments we need to test is linear. 
For each such triple we can check if there exists an empty $1$-skeleton lune
in time proportional to the complexity of the set 
$Q(s_1,s_2) \cap (2-VR(s_1,s_2)) \cap (3-VR(s_1,s_2,s))$.
The total complexity of those sets is $O(n)$. 
Hence, the complexity of the algorithm is $O(n)+O(n \log n) = O(n \log n)$.
\end{proof}

\section{Conclusions}

The running time of the presented algorithms for $\beta$-skeletons for sets 
of $n$ line segment ranges between $O(n\log n)$, $O(n^2 \alpha (n) \log n)$ 
and $O(n^3 \lambda_4(n))$ and depends on the value of $\beta$.  
For $0 < \beta < 1$ the $\beta$-skeleton 
is not related to the Delaunay triangulation of the underlying set of segments.
%However, a relatively efficient $O(n \log n)$ algorithm for the Gabriel Graph 
%($\beta = 1$), may suggest existence of more efficient algorithms for other values 
%of $\beta$, especially for $1 \le \beta \le 2$.
%Additional interesting questions about $\beta$-skeletons are related to their 
%connections with $k$-order Voronoi diagrams for line segments.
%The existence of this algorithm suggests that it may be possible to find a faster way 
%to compute $\beta$-skeletons for other values of $\beta$, especially 
%for $1 \le \beta \le 2$.\\
The existence of a relatively efficient algorithm for the Gabriel Graph 
suggests that it may be possible to find a faster way 
to compute $\beta$-skeletons for other values of $\beta$, especially 
for $1 \le \beta \le 2$.

 The edges of the Delaunay triangulation for line segments can be represented 
in the form described in this paper as rectangles contained in $[0,1] \times [0,1]$
square in the $t_1,t_2$-coordinate system.
If for each pair of $\beta$-skeleton edges the intersection of the corresponding
sets for the $\beta$-skeleton and the Delaunay triangulation is not empty then
there exist a plane partition generated by some pairs of generators of $\beta$-skeleton
neighborhoods. Unfortunately, it is not always possible (see Figure \ref{fig:diameter} in the Appendix).\\
The algorithms shown in this work for each pair of segments find such a position 
of generators that the corresponding lune does not intersect any other segment. 
We could consider a problem in which the number of used generators of neighborhoods is $n$ (one generator per each edge). Then the method described in the paper can also be used. 
We analyze a $n$-dimensional space and test if 
$[0,1]^n \setminus \bigcup_{s_i,s_j \in S, s\in S\setminus \{s_i,s_j\}} F(s_i,s,s_j) 
\times R^{n-2} \neq \emptyset$, where $i$ and $j$ also define corresponding coordinates 
in $R^n$. Unfortunately, such an algorithm is expensive. 
 However, in this case a $\beta$-skeleton already generates a plane partition.

 The total kinetic problem that can be solved in similar way is a construction 
$\beta$-skeletons for points moving rectilinear but without limitations concerning 
intersections of neighborhoods with lines defined by the moving points. In this case the form of sets 
$F(s_i,s,s_j)$ changes and the solution is much more complicated.

 Are there any more effective algorithms for those problems? 

Additional interesting questions about $\beta$-skeletons are related 
to their connections with $k$-order Voronoi diagrams for line segments.  

%We have described algorithms for computing $\beta$-skeletons 
%for a set of $n$ line segments $S$. For $0 <\beta < 1$, the running time is $O(n^4)$. 
%For $1 \leq \beta$ we can compute $G_{\beta}(S)$ in $O(n^2\alpha(n) \log n)$ time. 
%In the special case of Gabriel Graph we can solve this problem in $O(n \log n)$ time.
%An open question is if the $\beta$-skeleton can be computed faster (especially 
%for $1 \leq \beta \leq 2$). 
%Connections between $\beta$-skeletons and Voronoi diagrams for line segments
%(especially $k$-order Voronoi diagrams) are also unexplored.
%Many interesting questions concern properties of $\beta$-skeletons for segments,
%as well as practical applications, for example correlations between long magnetized 
%objects. 

\noindent
{\bf Acknowledgments}\\
\noindent
The authors would like to thank Jerzy W. Jaromczyk for important comments.

%%
%% Bibliography
%%

%% Either use bibtex (recommended), but commented out in this sample

%\bibliography{dummybib}

%% .. or use bibitems explicitely

%\nocite{Simpson}

\clearpage

\section{Appendix}

\subsection{Coefficients used in Section $3$:} 

$M = (y_2 - y_1)(A_1 \cos \delta + A_2 \sin \delta) - (x_2 - x_1)(-A_1 \sin \delta 
+ A_2 \cos \delta)$, 
$N = (y^2_2 - y^2_1)(A_1 \cos \delta + A_2 \sin \delta)$. \\

\subsection{Polynomials computed in Section $3$:} 

$p_1(t_1) = [(B_1 \cos \delta + B_2 \sin \delta)(y_2 - y_1) + (x^2_2 - x^2_1)
(-A_1 \sin \delta + A_2 \cos \delta) - (x_2 - x_1)(-B_1 \sin \delta + B_2 \cos \delta)]t_1 
+(A_1 \cos \delta + A_2 \sin \delta)(y_1 - y^2_1) - (x_2 - x_1)(-C_2 \cos \delta 
+ C_2 \cos \delta) + (x^2_1 - x_1)(-A_1 \sin \delta + A_2 \cos \delta)$, \\
$p_2(t_1) = (x^2_2 -x^2_1)(-B_1 \sin \delta + B_2 \cos \delta)t^2_1 
+ [(B_1 \cos \delta +B_2 \sin \delta)(y_1 - y^2_1) + (x^2_1 - x_1)(-B_1 \sin \delta 
+ B_2 \cos \delta) + (x^2_2 - x^2_1)(-C_1 \sin \delta +C_2 \cos \delta)]t_1 
+ (C_1 \cos \delta + C_2 \sin \delta)(y_1 - y^2_1) + (x^2_1 - x_1)(C_1 \sin \delta 
+ C_2 \cos \delta)$, \\
$p_3(t_1) = (y^2_2 -y^2_1)(B_1 \cos \delta + B_2 \sin \delta)t_1 + (y^2_2 - y^2_1)
(C_1 \cos \delta + C_2 \sin \delta)$.

\begin{figure}[htbp]
\centering
\includegraphics[scale=0.3]{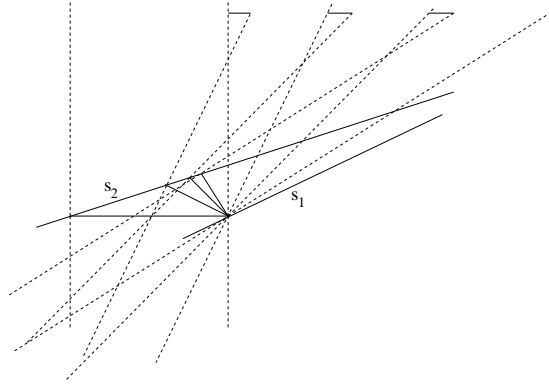}
\caption{An example of a set $S$ (for $\beta$ close to infinity) where 
the same point in $s_1$ generates lunes which correspond to points in different connected components of $[0,1] \times [0,1] \setminus \bigcup_{s \in S \setminus \{s_1,s_2\}} F(s_1,s,s_2)$.}
\label{fig:paski}
\end{figure}

\begin{figure}[htbp]
\centering
\includegraphics[scale=0.25]{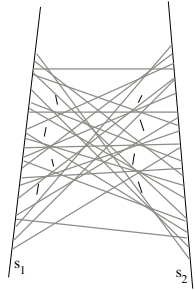}
\caption{
An example of a set $S$ where for segments $s_1, s_2$ the difference 
$[0,1] \times [0,1] \setminus \bigcup_{s \in S \setminus \{s_1,s_2\}} F(s_1,s,s_2)$
contains $\Omega(n^2)$ connected components (for a very small $\beta$). 
}
\label{fig:omega4}
\end{figure}

\begin{figure}[htbp]
\centering
\includegraphics[scale=0.55]{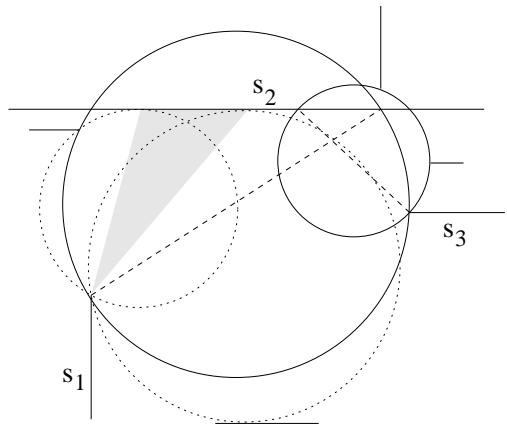}
\caption{
An example of a set $S$ where for segments $s_1, s_2$ there is no such Gabriel Graph neighborhood that the diameter of the disc defining this neighborhood is contained in the Delaunay Triangulation edge $s_1s_2$ (the light grey area) even though the edge $s_1s_2$ belongs to the Gabriel Graph for $S$. Moreover, generators of the neighborhoods of the {\em GG}-edges $s_1s_2$ and $s_2s_3$ create intersecting diameters between corresponding line segments.
}
\label{fig:diameter}
\end{figure}

\clearpage

\end{document}